\definecolor{green}{rgb}{0.6627,0.8196,0.5568}
\definecolor{purple}{rgb}{ 0.7647,    0.6078,    0.8824}
\newcommand{\discardpages}[1]{
  \xdef\discard@pages{#1}
  \AtBeginShipout{
    \renewcommand*{\do}[1]{
      \ifnum\value{page}=##1\relax%
        \AtBeginShipoutDiscard
        \gdef\do####1{}
      \fi%
    }%
    \expandafter\docsvlist\expandafter{\discard@pages}
  }%
}
\newif\ifkeeppage
\newcommand{\keeppages}[1]{
  \xdef\keep@pages{#1}
  \AtBeginShipout{
    \keeppagefalse%
    \renewcommand*{\do}[1]{
      \ifnum\value{page}=##1\relax%
        \keeppagetrue
        \gdef\do####1{}
      \fi%
    }%
    \expandafter\docsvlist\expandafter{\keep@pages}
    \ifkeeppage\else\AtBeginShipoutDiscard\fi
  }%
}
\DeclarePairedDelimiter{\ceil}{\lceil}{\rceil}
\DeclarePairedDelimiter\floor{\lfloor}{\rfloor}
\newtheorem{Theorem}{Theorem}
\newtheorem{Proposition}{Proposition}
\newtheorem{Example}{Example}
\newtheorem{Definition}{Definition}
\begin{document}
%


\title{Non-Linear Analog Processing in MIMO Systems with
Coarse Quantization}

\author{%
  \IEEEauthorblockN{Marian Temprana Alonso, \textit{Graduate Student Member, IEEE},
                    Xuyang Liu, \textit{Graduate Student Member, IEEE},
                    Hamidreza Aghasi, \textit{member, IEEE},
                    Farhad Shirani, \textit{member, IEEE}
                   }\thanks{This work was presented in part at IEEE
      International Symposium on Information Theory and IEEE Global Communications Conference \cite{Shirani2022,shirani2022quantifying}. This work was supported by NSF grants CCF-2241057 and CCSS-2242700/2233783.
      \\ (Corresponding author: M. Temprana Alonso). 
      \\ M. Temprana Alonso is with School of Computing  and Information Sciences, Florida International University, Miami, FL (email: mtemp009@fiu.edu)
      \\ X. Liu is with Department of Electrical Engineering and Computer Science,  University of California, Irvine (email: xuyanl3@uci.edu)
      \\ H. Aghasi is with Department of Electrical Engineering and Computer Science,  University of California, Irvine (email: haghasi@uci.edu)
      \\ F. Shirani is with School of Computing  and Information Sciences, Florida International University, Miami, FL (email: fshirani@fiu.edu)}}



%


\maketitle

 \begin{abstract}
Analog to digital converters (ADCs) are a major contributor to the power consumption of multiple-input multiple-output (MIMO) receivers in large bandwidth millimeter-wave systems. \textcolor{black}{Prior works have considered two mitigating solutions to reduce the ADC power consumption: i) decreasing the number of ADCs via analog and hybrid beamforming, and ii) decreasing the ADC resolution, i.e., utilizing one-bit and few-bit ADCs.} 
 These mitigating solutions lead to performance loss in terms of achievable rates due to increased quantization error. In this work, the use of nonlinear analog operators such as envelope detectors and polynomial operators, prior to sampling and quantization is considered, as a way to reduce the aforementioned rate-loss. The receiver architecture consists of linear combiners, nonlinear analog operators, and few-bit ADCs.
The fundamental performance limits of the resulting communication system, in terms of achievable rates,  are investigated under various assumptions on the set of implementable analog operators. \textcolor{black}{Extensive numerical evaluations are provided to evaluate the set of achievable rates and the power consumption of the proposed receiver architectures.} Circuit simulations and measurement results, based on both 22 nm FDSOI CMOS technology and 65 nm Bulk CMOS transistor technologies, are provided to justify the power efficiency of the proposed receiver architectures. 
\end{abstract}


%
\IEEEpeerreviewmaketitle

\section{Introduction}
In order to accommodate the demand for higher data-rates, the wireless spectrum has been continuously expanding over the past several decades. The millimeter-wave (mm-wave) spectrum is utilized in
the fifth generation (5G) wireless systems to allow for larger channel bandwidths
compared to earlier generation radio frequency systems, which operate in frequencies {below 6 GHz \cite{6GHz}}.
The energy consumption of components such as analog to digital converters (ADCs) increases significantly in mm-wave systems. In theory, the power consumption of an ADC grows linearly with bandwidth, and the rate of increase is even more significant in practice due to the excessive loss associated with the passive components at higher frequencies, which causes an abrupt drop in ADC energy-efficiency as the bandwidth is pushed past 100 MHz  \cite{ADCpower}.
 For instance,
the power consumption of commercial
high-speed ($\geq$ 20 GSample/s), high-resolution
(8-12 bits) ADCs is around 500 mW per ADC \cite{zhang2018low}.  

\textcolor{black}{The ADC power consumption is particularly prohibitive in mm-wave systems utilizing large antenna arrays and digital beamforming archiectures.} To elaborate, in order to mitigate the inherent high isotropic path loss at high frequencies, mm-wave systems utilize {directive} narrow-beams generated by large antenna arrays \cite{rappaport2015millimeter}. For instance, 5G wireless networks  envision hundreds of antennas at the base station (BS) and in excess of ten antennas at the user-end (UE) \cite{akram2021massive}. 
In conventional multiple-input multiple-output (MIMO) systems with digital beamforming, each antenna output is digitized separately by a dedicated ADC. This requires a large number of ADCs which are a significant source of power consumption in large bandwidth MIMO receivers \cite{heath2016overview,mendez2015channel}. \textcolor{black}{Analog and hybrid beamforming have been proposed to mitigate the ADC power consumption by reducing the number of ADCs. 
In hybrid beamforming, the receiver uses a set of analog beamformers in the radio frequency (RF) domain to combine the large number of analog signals at the receiver antennas and feed them to a small set of ADCs \cite{molisch2017hybrid,heath2016overview,bernardo2022}. 
In analog beamforming, a single beamformer is used to linearly combine the received signals \cite{venkateswaran2010analog,ning2021prospective}.}

\textcolor{black}{In addition to these beamforming architectures, an alternative approach to reduce ADC power consumption is to reduce the resolution of the ADCs. To elaborate, in the standard ADC design, power consumption is proportional to the number of quantization bins and hence grows exponentially in the number of output bits \cite{walden1999analog}, which prohibits the use of high resolution ADCs. There has been extensive recent efforts to design receiver architectures and coding strategies using 
 analog, hybrid, and digital beamforming with few-bit ADCs  \cite{li2018achievable,ning2021prospective,han2021hybrid,molisch2017hybrid,heath2016overview,khalili2020throughput,dutta2020capacity,jacobsson2017throughput,mo2015capacity,mo2017channel}.  The restriction to few-bit ADCs leads to reduced communication rates compared to when high resolution ADCs are utilized.
 In hybrid beamforming with few-bit ADCs, simple analog processing, linear processing in particular, is used to partially mitigate the rate loss due to low resolution quantization \cite{zhao2020energy,roth2018comparison,mo2015capacity,khalili2018mimo,yu2023low}.  }
 In \cite{zirtiloglu2022}, a power-efficient hybrid MIMO receiver is presented where the analog and digital processing are jointly optimized by using task-specific quantization techniques. The hybrid beamforming approach was further extended in  \cite{khalili2021mimo}, where hybrid blockwise architectures, consisting of delay elements, were considered. This allows for temporal linear processing of signals in the analog domain, providing additional degrees of freedom in choosing the processing function. 
Adaptive thresholds ADCs were considered in  \cite{khalili2021mimo}, which allow for modifying the ADC thresholds based on past quantization outputs. This  method has similarities with successive approximation register (SAR) ADCs \cite{suarez1975SAR}.  The latter two approaches improve the channel capacity compared with conventional hybrid beamforming architectures, however, the practical implementation of high precision analog delay elements and SAR ADCs with high sampling rates is challenging due to synchronization issues and delay accuracy limitations.

 %
 %

In this work, we argue that digital, hybrid, and analog beamforming approaches suffer from a phenomenon which we call the \text{curse of low dimensions} (Section \ref{sec:curse_of_low}). We quantify the rate-loss due to this phenomenon, and provide solutions to mitigate it. Particularly, we consider nonlinear analog operators --- such as envelope detectors and low degree polynomial operators --- prior to sampling and quantization, as a way to mitigate the rate-loss due to coarse quantization. The power consumption and circuit design for {receiver architectures deploying} these nonlinear operators based on measurements and simulations in 22 nm and 65 nm CMOS technologies are provided in Section \ref{sec:cir}.

The main contributions of this work are summarized below:
\vspace{-.15in}
\color{black}
\begin{itemize}[leftmargin=*]
   \item To characterize the channel capacity under analog beamforming when envelope detectors are used for analog signal processing. 
(Theorems \ref{th:1}-\ref{th:4}). 
  \item To characterize the capacity under analog beamforming when polynomial operators are used. 
(Theorem \ref{th:5}). 
\item To introduce a receiver architecture for hybrid beamforming using envelope detectors, and to provide the high signal-to-noise ratio (SNR) capacity and inner-bounds to the low SNR achievable rates in terms of the total ADC power budget $P_{ADC}$, number of output levels of each ADC, $\ell$, and the polynomial degree, $\delta_{poly}$ (Theorem \ref{th:6}).
    \item To provide circuit designs and associated performance simulations for implementing polynomials of degree up to four and pairs of concatenated envelope detectors, and to evaluate their power consumption. (Sections \ref{sec:num} and  \ref{sec:cir})
    \item To evaluate the rate-power tradeoff under linear and nonlinear analog processing using the rate derivations in our theoretical analysis and the power figures in our circuit simulations.  (Sections \ref{sec:num} and  \ref{sec:cir})
\end{itemize}
\color{black}

\color{black}
{\em Notation:}
Sets are denoted by calligraphic letters such as $\mathcal{X}$, families of sets by sans-serif letters such as $\mathsf{X}$. 
The set $\{1,2,\cdots, n\}$ is represented by $[n]$. $\mathcal{X}^c$ denotes the complement of $\mathcal{X}$.
The vector $(x_1,x_2,\hdots, x_n)$ is written as $x^n$, and $(x_k,x_{k+1},\cdots,x_n)$ is denoted by $x_{k}^n$. The $i$th element of $x^n$ is written as $x_i$.  An $n\times m$ matrix is written as $h^{n\times m}=[h_{i,j}]_{i\in [n], j\in [m]}$, its $i$th row is denoted by $h_{i,:}=[h_{i,j}]_{j\in [m]}$ and its $j$th column is $h_{:,j}=[h_{i,j}]_{i\in [n]}$. The $n\times n$ identity matrix is denoted by $\mathbf{I}_n$.
We use bold-face letters such as $\mathbf{x}$ and $\mathbf{h}$ instead of $x^n$ and $h^{n\times m}$, respectively, to represent vectors and matrices when the dimension is clear from the context.  $\mathbf{x}^H$ denotes the hermitian of $\mathbf{x}$. We write $||\cdot||_2$ to denote the $L_2$-norm.  
 Upper-case letters such as $X$ represent random variables, and lower-case letters such as $x$ represent their realizations.  Similarly, random vectors and random matrices are denoted by upper-case letters such as $\mathbf{X}$  and $\mathbf{H}$, respectively. For a Gaussian random vector $\mathbf{X}$ 
with mean vector $\pmb{\mu}$ and covariance matrix $\mathbf{\Sigma}$, we write $\mathbf{X}\sim\mathcal{N}( \pmb{\mu},\mathbf{\Sigma})$.

\section{Problem Formulation}
\label{sec:mot}
  \begin{figure}[t]
\centering 
\includegraphics[width=0.4\textwidth]{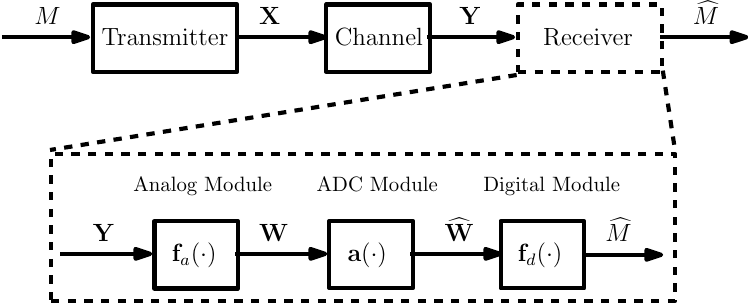}
\caption{
\textcolor{black}{(Top) MIMO communication setup consisting of message $M$, channel input $\mathbf{X}$, channel output $\mathbf{Y}$, and message reconstruction $\widehat{M}$. (Bottom) receiver consisting of an analog processing module $\mathbf{f}_a(\cdot)$, ADC input $\mathbf{W}$,
ADC module $\mathbf{a}(\cdot)$, ADC output $\mathbf{\widehat{W}}$, and a digital processing module $\mathbf{f}_{d}(\cdot)$.}}
\vspace{-.2in}
\label{fig:classic}
\end{figure}
\label{sec:form}
\subsection{System Model}
We consider a MIMO channel with $n_t$ transmit antennas and $n_r$ receive antennas. The input and output\footnote{To simplify notation, we have considered real-valued variables. The derivations can be extended to complex variables in a straightforward manner.} 
$(\mathbf{X}, \mathbf{Y})\in \mathbb{R}^{n_t}\times \mathbb{R}^{n_r}$ are related through $
\mathbf{Y}=\mathbf{h}\mathbf{X}+\mathbf{N}$, where $\mathbf{h}\in \mathbb{R}^{n_r\times n_t}$ is the 
(fixed) channel gain matrix
and 
$\mathbf{N}\sim \mathcal{N}(0,I_{n_r})$ is a vector of Gaussian variables with zero mean and unit variance. The channel input has average power constraint $P_T$, i.e., $\mathbb{E}(\|\mathbf{X}\|_2^2)\leq P_T$. 

The receiver model is shown in the bottom figure in Figure \ref{fig:classic}. In general, upon receiving the channel output $\mathbf{Y}$, the receiver may perform analog processing operations. For instance, in hybrid beamforming the receiver applies an affine transformation on the received signal using an analog linear combiner, and in analog beamforming, phase shifters are used to produce a linear combination of the received signals. 
The analog processing operations are represented by the function $\mathbf{f}_{a}(\cdot)$. The output of $\mathbf{f}_a(\cdot)$ has dimension equal to the number of ADCs available at the receiver. That is, the function $\mathbf{f}_a(\cdot)$ takes the $n_r$-dimensional vector $\mathbf{Y}$ as input, and outputs an $n_q$-dimensional vector $\mathbf{W}=\mathbf{f}_a(\mathbf{Y})$, where $n_q$ is the number of ADCs. 
We denote $\mathbf{f}_a= (f_1,f_2,\cdots,f_{n_q})$, where $f_j:\mathbb{R}^{n_r}\to \mathbb{R}, j\in [n_q]$. 
The analog vector $\mathbf{W}$ is fed to the ADCs to produce the digitized signal $\mathbf{\widehat{W}}$.
The operation of the ADCs is described in Section \ref{sec:ADC}.
The receiver then performs blockwise digital processing on the ADC outputs to produce the reconstruction $\widehat{M}$ of the transmitted message $M$. The digital processing operations performed to decode the message are represented by $\mathbf{f}_d(\cdot)$.


\subsection{The ADC Module}
\label{sec:ADC}
We assume that the receiver is equipped with a collection of ADCs. To simplify the notation, we assume that each of the ADCs have $\ell$ output levels, i.e., the ADCs have the same resolution.  An ADC is modeled as a mapping from  a continuous-valued input $y$ to a discrete-valued output $\hat{y}$. For an ADC with $\ell$ output levels, we define a threshold vector $t^{\ell-1}=(t_1,t_2,\cdots,t_{\ell-1})$. The output 
is then given by:
 \begin{align}
     \hat{y}= 
     \begin{cases}
         0 \qquad &\text{if } \quad  y<t_1\\
         i &\text{if } \quad \exists i\in [\ell-2], t_i\leq y < t_{i+1}\\
         \ell-1 &\text{if } \quad t_{\ell-1}\leq y 
     \end{cases},
     \label{eq:ADC}
 \end{align}
 where $t_1<t_2<\cdots<t_{\ell-1}$.
 For a receiver equipped with $n_q$ ADCs each with $\ell$ output levels, the threshold matrix is defined as $t^{n_q\times {(\ell-1)}}$, where $t_{i,:}, i\in [n_q]$ is the threshold vector corresponding to the $i$th ADC. 

Theoretically, the power consumption of an ADC is equal to the product of its capacitive load C, voltage V, sampling frequency $\omega_s$, and number of output levels $\ell$ \cite{BR}. That is, the power consumption of a single $\ell$-level ADC is $\alpha\ell$, where we have defined $\alpha\triangleq CV\omega_s$. 
Given a total ADC power budget $P_{ADC}$ and number of output levels $\ell$, the maximum number of ADCs at the receiver is  given by $n_q=\floor{\frac{P_{ADC}}{\alpha \ell}}$.

\subsection{Conventional Beamforming Architectures}
\label{sec:arch}
\noindent \textbf{Analog Beamforming:} Analog beamforming utilizes analog phase shifters and only one RF chain 
for the beamforming operation. This leads to a simplified design and low power consumption compared to hybrid and digital beamforming. However,  it potentially yields lower rates. The analog processing function in analog beamforming is represented as:
\begin{align*}
    W=\mathbf{f}_a(\mathbf{Y})= \mathbf{b}^{T}\mathbf{Y},
\end{align*}
where $\mathbf{b}= (b_1,b_2,\cdots,b_{n_r})$ is the analog beamforming vector\footnote{By default, we assume column vectors unless indicated otherwise.} and $\mathbf{b}^T$ denotes the transpose of $\mathbf{b}$. 

\noindent\textbf{Hybrid Beamforming:} Hybrid beamforming uses a collection of analog 
beamformers  to linearly combine the received signals in the analog domain \cite{zhang2005variable,alkhateeb2014mimo}.
This linear analog signal processing improves performance, in terms of achievable rates, by rotating the received signal such that the information-loss in the quantization step is reduced 
\cite{koch2013low,khalili2018mimo}.    The analog processing function in hybrid beamforming is represented as:
\begin{align}
    \mathbf{W}=\mathbf{f}_a(\mathbf{Y})= \mathbf{v}\mathbf{Y},
    \label{eq:hyb0}
\end{align}
where $\mathbf{v}= [v_{i,j}]_{i\in [n_q], j\in [n_r]}$ is the hybrid beamforming matrix.

\noindent\textbf{Digital Beamforming:} In digital beamforming, each antenna is directly connected to its dedicated ADC, i.e., $n_r=n_q$ and $\mathbf{f}_a(\cdot)$ is the identity function \cite{mo2015capacity,MIMO2,nossek2006capacity}. This leads to an increased number of ADCs and potentially higher achievable rates at the expense of increased power consumption.


\section{Curse of Low Dimensions}
\label{sec:curse_of_low}
\begin{figure*}[t]
\centering
  \begin{subfigure}[b]{0.18\textwidth}
  \includegraphics[width=\linewidth]{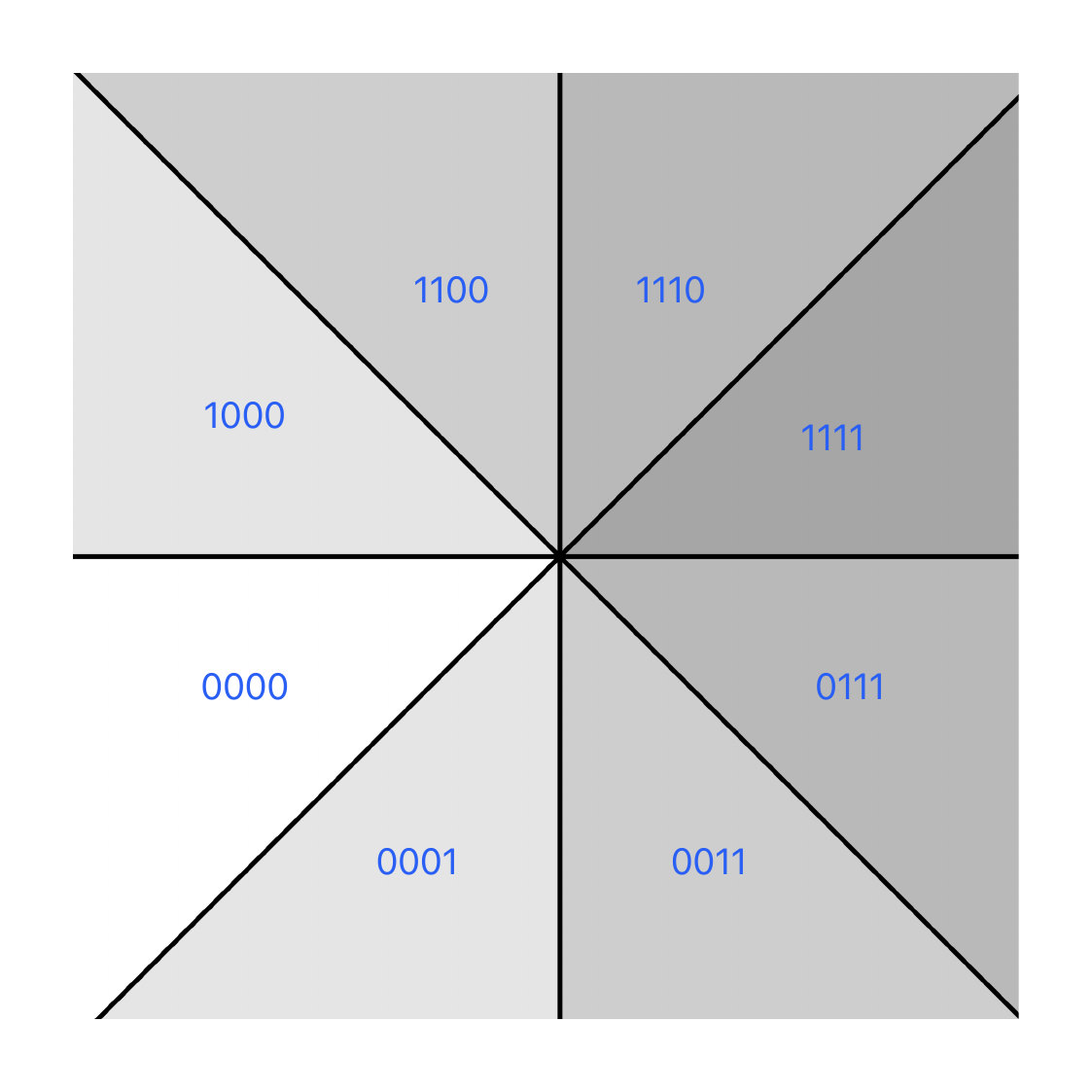}
    \caption{PSK}
  \end{subfigure}
  \hfill
  \begin{subfigure}[b]{0.18\textwidth}
    \includegraphics[width=\linewidth]{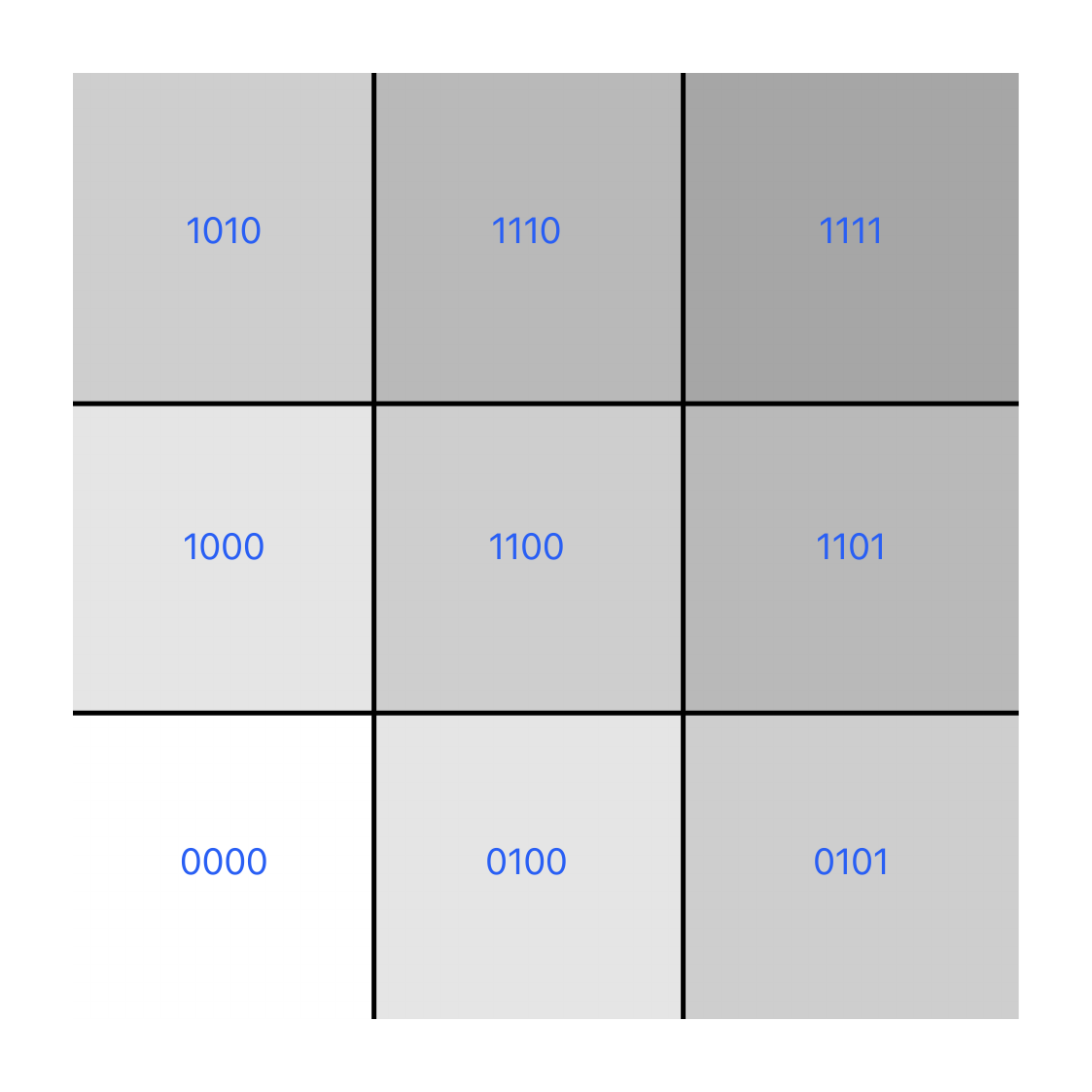}
    \caption{QAM}
  \end{subfigure}
  \hfill
  \begin{subfigure}[b]{0.18\textwidth}
    \includegraphics[width=\linewidth]{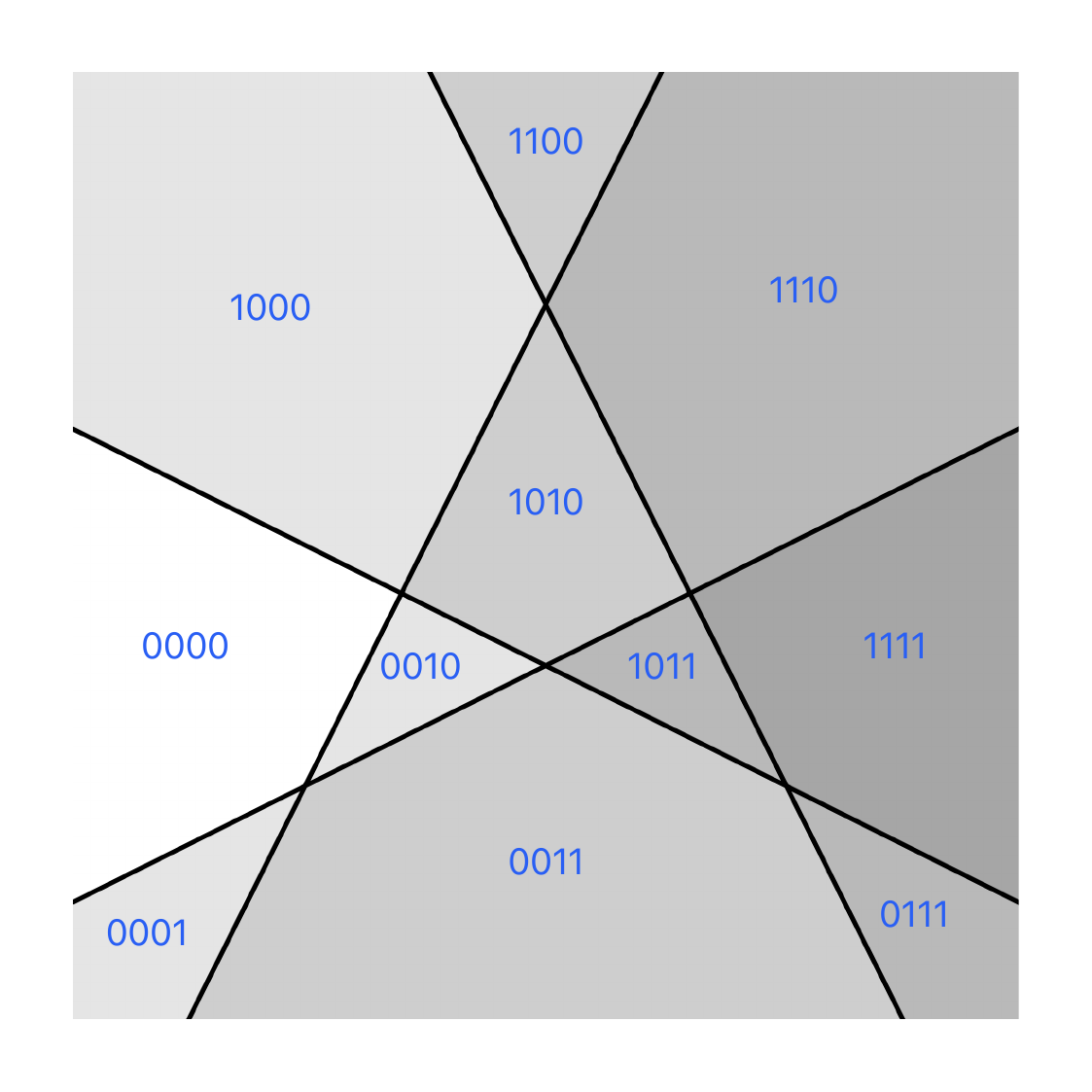}
    \caption{General Position \cite{khalili2021mimo}}
  \end{subfigure}
  \hfill
    \begin{subfigure}[b]{0.18\linewidth}
   { \includegraphics[width=\linewidth]{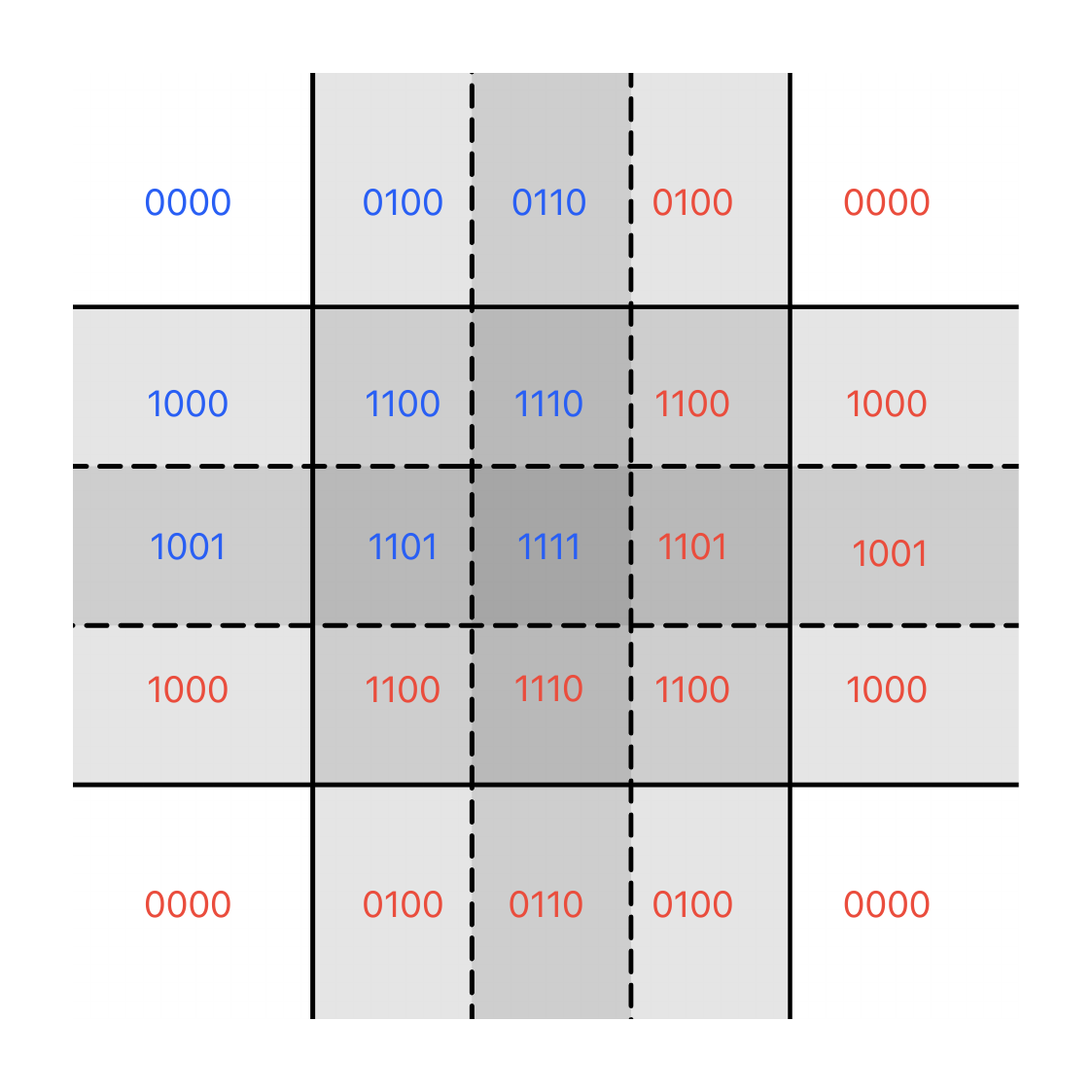}}
   \caption{Equation \eqref{eq:fig:d}}
  \end{subfigure}
  \begin{subfigure}[b]{0.18\linewidth}
   { \includegraphics[width=\linewidth]{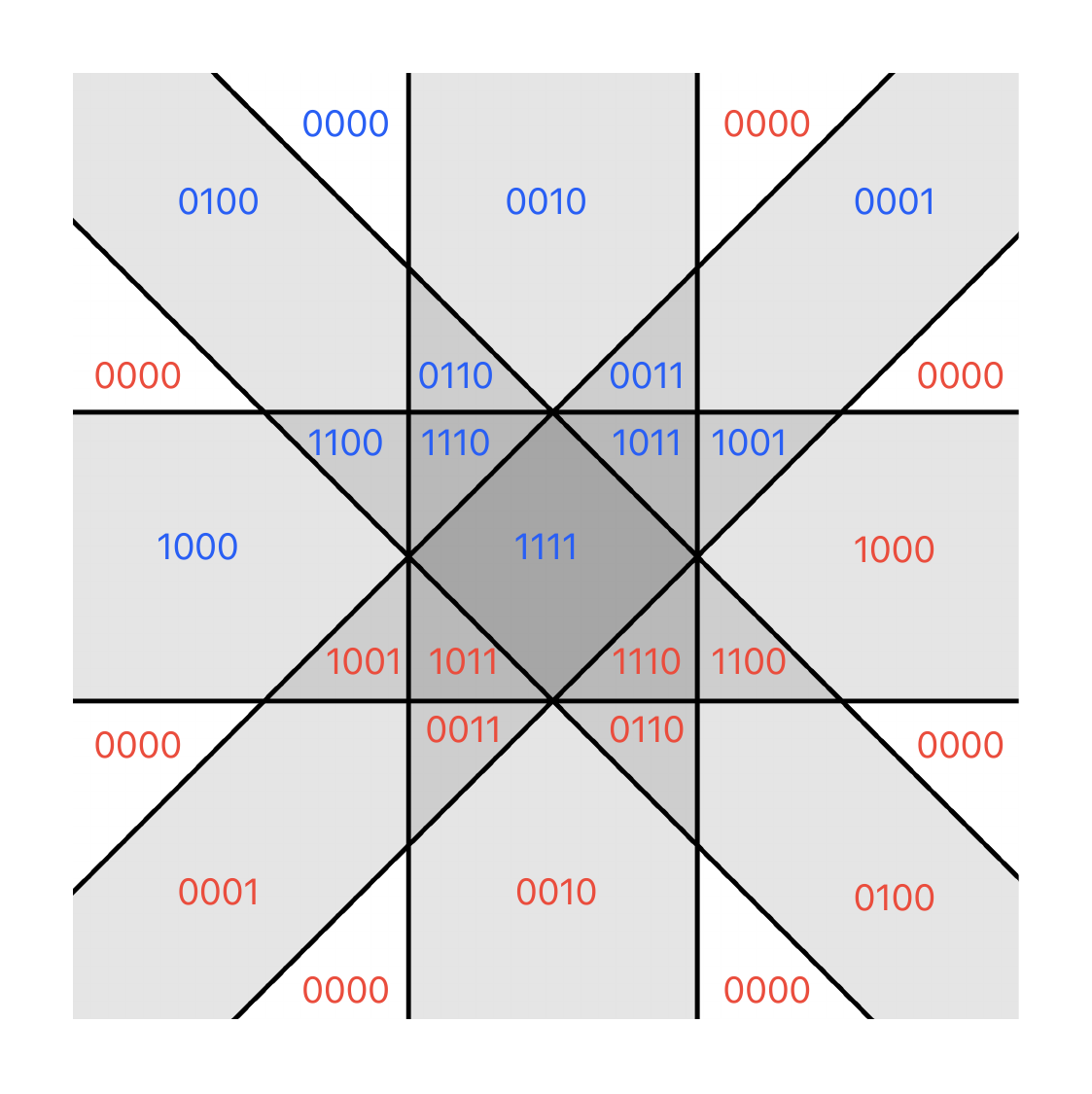}}
   \caption{Equation \eqref{eq:fig:e}}
  \end{subfigure}
  \label{fig:nonlinear_constellations}
  \vspace{-.1in}
    \caption{\textcolor{black}{(a)--(c) The Voronoi Regions for various hybrid beamforming architectures considered in Section \ref{sec:ex1}. (d)-(e)
    Voronoi regions for beamforming architectures with nonlinear analog processing in the example of
    Section \ref{sec:ex1}. Blue-colored binary vectors correspond to regions with unique ADC outputs.}}
  \label{fig:combined_figure}
  \vspace{-0.25in}
\end{figure*}

\begin{figure}[t]
\centering    \includegraphics[width=0.8\linewidth]{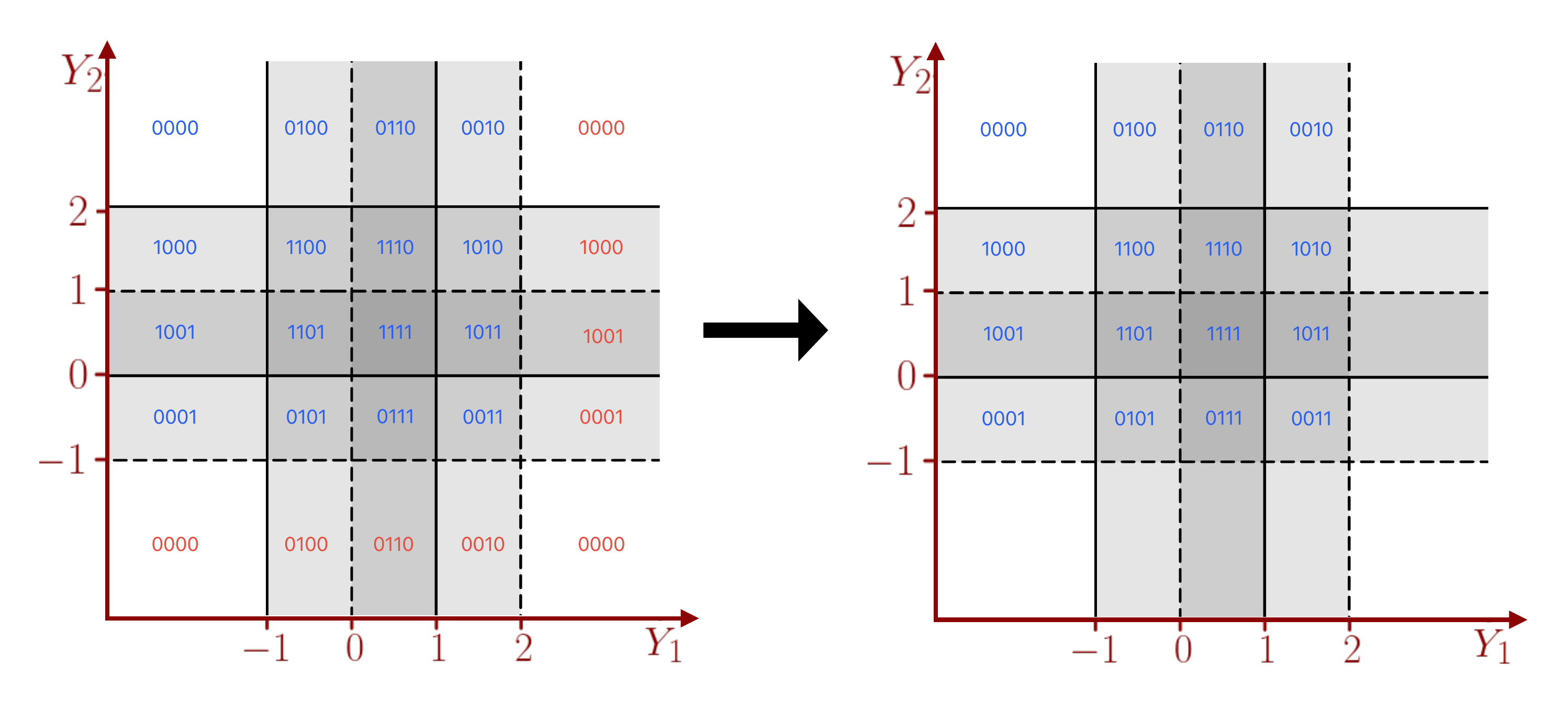}
\vspace{-.1in}
    \caption{\textcolor{black}{The Voronoi regions and ADC outputs produced when using envelope detectors in the example of Equation \eqref{eq:exenv}.}}
    \vspace{-.2in}
\label{fig:unique_regions}
\end{figure}
As discussed in the introduction, reducing the number and resolution of the ADCs leads to a reduction in the resulting channel capacity. This rate-loss is partially unavoidable as the limited number of output bits of the ADCs creates a fundamental bottleneck in the communication channel. For instance, if the receiver is equipped with a single one-bit ADC, then the communication rate cannot exceed one bit per channel-use even in the high SNR regime. More generally, for a receiver equipped with $n_q$ ADCs each with $\ell$ output levels, the communication rate is bounded from above by $n_q\log{\ell}$ bits per channel-use. This implies that the capacity may scale linearly with $n_q$. However, it has been observed in prior works that the channel capacity only scales logarithmically with $n_q$. For instance, in \cite{mo2015capacity}, it was shown that the  high SNR capacity of the single-input multiple-output (SIMO) system with digital beamforming is well-approximated by $\log{(4n_q+1)}$. 
Similarly, the high SNR achievable rate in systems with hybrid beamforming grows logarithmically with $n_q$ \cite{khalili2018mimo}. In this section, we argue that this logarithmic growth, as opposed to the expected linear growth, is due to a phenomenon that we call the \textit{curse of low dimensions}. We first provide an explanation of this phenomenon by evaluating the high-SNR capacity of hybrid beamforming systems equipped with one-bit ADCs. Then, we introduce a mitigating solution, using non-linear analog operators, and show its effectiveness through a numerical example. This intuitive explanation forms the motivation for the rest of the paper. 

\subsection{High SNR Capacity of Hybrid Beamforming Systems}
To provide a high-level description of the curse of low dimensions, let us consider the high SNR capacity of a hybrid beamforming MIMO system equipped with $n_t$ transmit antennas, $n_r$ receive antennas, $n_q$ one-bit ADCs, hybrid beamforming matrix $\mathbf{v}$, and ADC threshold vector $t_{i,1}, i\in [n_q]$.  Furthermore, let us assume that $\mathbf{Y}\approx \mathbf{h}\mathbf{X}$ in the high SNR regime. Hence, the channel output space is equal to the subspace generated by the columns of $\mathbf{h}$, denoted by $\mbox{Span}(\mathbf{h})$.

Each one-bit ADC yields a partition of the output space $\mbox{Span}(\mathbf{h})$ into two subspaces, where one subspace corresponds to points yielding an ADC output equal to zero, and the other corresponds to points yielding ADC output equal to one. To elaborate, recall from Section \ref{sec:arch} that in hybrid beamforming the input to the $i$th ADC is given by $W_i=\mathbf{v}_{i,:}\mathbf{Y}$. Then, the $i$th ADC output is given by 
\begin{align*}
    \widehat{W}_{i}= 
    \begin{cases}
        0\qquad &\text{ if } W_i<t_{i,1},\\
        1&\text{otherwise}
    \end{cases}.
\end{align*}
We define the \textit{decision region} $\mathcal{S}_{i,0}= \{\mathbf{y}| w_i\leq t_i\}, i\in [n_q]$, and its complement as $\mathcal{S}_{i,1}= \mathcal{S}_{i,0}^c$, where $w_i= \mathbf{v}_{i,:}\mathbf{y}$. The  output vector of the ADC module is the binary-valued vector $\mathbf{\widehat{W}}=(\widehat{W}_1,\widehat{W}_2,\cdots,\widehat{W}_{n_q})$, where $\mathbf{Y}\in \mathcal{S}_{i,\widehat{W}_{i}}, i\in [n_q]$. Thus, for a given binary vector $\mathbf{\widehat{w}}\in \{0,1\}^{n_q}$, we have that 
\begin{align*}
    \widehat{\mathbf{W}}= \widehat{\mathbf{w}} \iff 
    \mathbf{Y}\in \bigcap_{i\in [n_q]}\mathcal{S}_{i,\widehat{w}_i}.
\end{align*}
Each non-empty $\bigcap_{i\in [n_q]}\mathcal{S}_{i,\widehat{w}_i}, \mathbf{\widehat{w}}\in \{0,1\}^{n_q}$ is called a Voronoi region. Let $C_h$ be the high SNR capacity. Then, by  the information processing inequality \cite{Cover}, we have that: 
\begin{align*}
    C_h
    = \max_{P_{\mathbf{X}}} I(\mathbf{X};\mathbf{\widehat{\mathbf{W}}}) 
   \leq  \log{|\mathcal{\widehat{W}}|},
\end{align*}
where $\mathcal{\widehat{W}}\!=\! \{\mathbf{\widehat{w}}| \bigcap_{i\in [n_q]}\!\mathcal{S}_{i,\widehat{w}_i}\!\neq\! \varnothing\}\!$ is the set of all Voronoi regions.

Ideally, we expect  $|\mathcal{\widehat{W}}|=2^{n_q}$, so that rate is equal to $n_q$ bits per channel-use. However, this is not the case for low-dimensional spaces. 
To explain further, let the dimension of the subspace $\mbox{Span}(\mathbf{h})$ be denoted by $n_d$. 
Then, the maximum number of partition regions $|\mathcal{\widehat{W}}|$ is equal to the maximum number of regions in an $n_d$-dimensional space when cut by $n_q$ hyperplanes  \cite{winder1966partitions}:
\begin{align}
\label{eq:cuts}
|\mathcal{\widehat{W}}| \leq 2\sum_{i=0}^{n_d-1} { n_q-1 \choose i}.
\end{align}
\noindent
Note that $ { n_q-1 \choose i}\leq (n_q-1)^{i}<n_q^{n_d}$ for $i\in \{0,1,\cdots,n_d-1\}$.
Consequently, for a fixed $n_d$, the maximum achievable rate 
is $\log{|\mathcal{\widehat{W}}|}<n_d\log{n_q}$, and grows logarithmically with  $n_q$. This is also separately observed in prior works \cite{mo2015capacity,khalili2018mimo,khalili2021mimo}.
On the other hand, if $n_d=n_q$, then $\log{|\mathcal{\widehat{W}}|}$ grows linearly with $n_q$. We call this phenomenon the \textit{curse of low dimensions} since it leads to reduced rates when the channel output dimension is smaller than the number of ADCs, i.e., when $n_d< n_q$. 

\subsection{A Mitigating Solution Using Non-Linear Processing}
\label{sec:ex1}
In this work, we argue that non-linear analog processing prior to quantization, using envelope detectors and polynomial operators, mitigates the aforementioned rate-loss. To provide a concrete example, let us consider a high SNR communication scenario with $n_t=n_r=2$ and $n_q=4$ one-bit ADCs at the receiver. Further assume that the channel is full-rank, so that $n_d=2$. 
  The ADCs perform the following comparisons $v_{i,1}Y_1+v_{i,2}Y_2\lessgtr t_i, i\in [4]$. The Voronoi regions
  can be represented as a partition of the two-dimensional Euclidean space using four lines. For instance, in phase-shift keying (PSK) modulation the thresholds are set to $t_i=0, i\in [4]$ and the linear combining coefficients are:
\begin{align*}
    &v_{1,1}=0, \quad v_{1,2}=1,\qquad v_{2,1}=-1,\quad v_{2,2}=1,\\
    &v_{3,1}=1, \quad v_{3,2}=0,\qquad v_{4,1}=1,\quad v_{4,2}=1.
\end{align*} 
As shown in Figure \ref{fig:combined_figure}(a), there are $|\mathcal{\widehat{W}}|=8$ Voronoi regions, as opposed to the ideal 16 regions. Alternatively, Figure \ref{fig:combined_figure}(b) shows the quadrature amplitude modulation (QAM) Voronoi regions, and Figure \ref{fig:combined_figure}(c) represents a general position constellation proposed in \cite{khalili2021mimo}, which achieves the highest number of Voronoi regions possible using hybrid beamforming \cite{khalili2021mimo}. Consequently, PSK achieves a high-SNR rate of at most $\log{8}$, QAM achieves $\log{9}$, and the construction of  \cite{khalili2021mimo} achieves $\log{11}$ bits per channel-use.

We propose to use non-linear analog processing to mitigate the aforementioned rate-loss. To elaborate, assume that the receiver can perform absolute value operations in the analog domain using envelope detectors. In particular, it receives $(Y_1,Y_2)$ and produces the analog signals $W_1=|Y_1|, W_2=|Y_1-1|, W_3=|Y_2|$ and $W_4=|Y_2-1|$. It then feeds each of these four signals to a one-bit ADC with threshold equal to one. That is, the ADC module performs the following comparisons:
\begin{align}
\label{eq:exenv}
|Y_1|\lessgtr1,\quad |Y_1-1|\lessgtr1, \quad |Y_2|\lessgtr1, \quad |Y_2-1|\lessgtr1.
\end{align}
The resulting Voronoi regions are shown in Figure \ref{fig:unique_regions}. Note that the resulting partition is a five-by-five grid in the two-dimensional space, which yields 
 a total of 25 partition elements. However, the ADC outputs are not unique for all of these partition elements. For instance, the top-left partition element, which corresponds to $Y_1\to -\infty$ and $Y_{2}\to \infty$, produces ADC output sequence $\widehat{\mathbf{W}}=(0,0,0,0)$. Similarly, the bottom-right partition element, which corresponds to $Y_1\to \infty$ and $Y_{2}\to -\infty$,  yields ADC output sequence $\widehat{\mathbf{W}}=(0,0,0,0)$. As shown in the figure, there are $|\widehat{\mathcal{W}}|=16$ partition elements with unique ADC outputs, yielding an (optimal) high SNR rate of $\log{16}=4$ bit per channel-use.

 To provide further insights into how the choice of non-linear analog processing functions affects the resulting channel capacity, we have provided two additional examples in Figure \ref{fig:combined_figure}. In particular, Figure \ref{fig:combined_figure}(d) shows the Voronoi regions using zero-bias envelope detectors:
 \begin{align}
     |{Y}_1| \lessgtr 1, \quad |{Y}_2| \lessgtr 1, \quad |{Y}_1| \lessgtr 3,\quad |{Y}_2| \lessgtr 3, 
     \label{eq:fig:d}
 \end{align}
and 
Figure \ref{fig:combined_figure}(e) shows the Voronoi regions resulting from using analog linear combining and zero-bias envelope detectors: 
\begin{align}
|{Y}_1| \lessgtr 1,\quad |{Y}_2| \lessgtr 1, \quad |{Y}_1+{Y_2}| \lessgtr 1, \quad|{Y}_1-{Y}_2| \lessgtr 1.
    \label{eq:fig:e}
\end{align}
It can be observed that the constellation of Figure \ref{fig:combined_figure}(e) achieves a high SNR capacity of $\log{9}$ bits per channel-use, whereas that of \ref{fig:combined_figure}(d) achieves  $\log{12}$ bits per channel-use.



%

\section{SISO Systems Equipped with Envelope Detectors}
\label{sec:env}
In this section, we consider a single-input single-output (SISO) system, where the analog processing operators are restricted to  concatenated sequences of envelope detectors. The (scalar) channel input and output are related through $Y=hX+N$, where $N\sim\mathcal{N}(0,1)$. 
We denote the transmit power by $P_T$ and the ADC power budget by $P_{ADC}$. The maximum number of concatenated envelope detectors is denoted by $\delta$. We represent such a system by the tuple $(P_T,P_{ADC},{h},\delta)$. The resulting channel capacity is represented by $C_{env}(P_T,P_{ADC},{h},\delta)$.
\subsection{Concatenated Sequences of Envelope Detectors}
We model the operation of an envelope detector as an absolute value operation. For instance, an envelope detector with input $y\in \mathbb{R}$ and bias $b\in \mathbb{R}$ outputs $|y-b|$. Sequences of envelope detectors can be concatenated with each other. For instance, consider a concatenated pair of envelope detectors with biases $b_1,b_2\in \mathbb{R}$, respectively. Then, for input $y\in \mathbb{R}$, the output of the concatenated pair is $\big||y-b_1|-b_2\big|$. Due to analog circuit limitations, we assume that a maximum of $\delta\in \mathbb{N}$ envelope detectors can be concatenated.
Consequently, we define the set of implementable analog functions as:
\begin{align*} 
\mathcal{F}^\delta_{env}=\{f(y)=A_s(y,b^s), y\in \mathbb{R}| s\in [\delta], 
    b^s\in \mathbb{R}^s\}, 
 \quad \delta\in \mathbb{N},
\end{align*}
where $A_1(y,b)\triangleq |y-b|, y,b\in \mathbb{R}$ and $A_s(y,b^s)\triangleq  A_1(A_{s-1}(y,b^{s-1}),b_s)= |A_{s-1}(y,b^{s-1})-b_s|, s\in \mathbb{N}$.
 That is, $\mathcal{F}^\delta_{env}$ consists of all functions generated using sequences of $s\leq \delta$ concatenated envelope detectors. Implementability is discussed in Section \ref{sec:cir}. Concatenating large numbers of envelope detectors leads to increased circuit noise and power consumption. Hence, there is a tradeoff between power-consumption, circuit complexity and robustness, and degrees of freedom in implementing analog processing functions.

\subsection{The Quantizer and Its Associated Codebook}
\label{sec:results}
 We call the combination of the envelope detectors and the ADC module the \textit{quantizer}, and represent its operation by a function $\mathbf{q}:\mathbb{R}\to [\ell]^{n_q}$, where $n_q$ is the number of ADCs and $\ell$ is the number of output levels of each ADC, i.e., $\mathbf{q}(y)=\mathbf{a}(\mathbf{f}_a(y))$. This is clarified through the following example. 
 \begin{Example}[\textbf{Quantizer}]
 \label{ex:0}
 Consider a SISO system equipped with two ADCs with three output levels, i.e., $n_q=2$ and $\ell=3$. Assume that a maximum of two envelope detectors can be concatenated with each other, i.e., $\delta=2$. Let $f_{1}(y)=A_2(y,(2,3))=||y-2|-3|$ be the input to the first ADC and $f_{2}(y)= A_2(y,(0,4))=||y|-4|$ the input to the second ADC. Let the ADC thresholds be given by: 
 \begin{align*}
& t_{1,1}= 1, \quad t_{1,2}=2, \quad t_{2,1}= 1, \quad t_{2,2}= 2,
\end{align*}
 The quantization function is $\mathbf{q}(y)=(\widehat{w}_1,\widehat{w}_2)$, where $\widehat{w}_i, i\in\{1,2\}$  are the two ADC outputs. For instance, for $y=-1$ the ADC inputs are
$f_1(-1)=0$ and $f_2(-1)= 3$. 
Since $f_1(-1)<t_{1,1}$ and $f_{2}(-1)>t_{2,2}$, 
the ADC output vector is $\mathbf{q}(-1)= (0,2)$. For general $y\in \mathbb{R}$, the quantizer output $\mathbf{q}(y)$ is shown in the last row of the table in Figure \ref{fig:code}. 
 \end{Example}
 
\begin{Definition}[\textbf{Quantizer}] 
Given the ADC threshold matrix $t^{n_q\times {(\ell-1)}}\!\in \!\mathbb{R}^{n_q\times {(\ell-1)}}$ and implementable analog functions $f_{j}\in\mathcal{F}_{env}^{\delta}, j\in [n_q]$, the quantizer $\mathbf{q}:\mathbb{R}\to [\ell]^{n_q}$ is defined as $\mathbf{q}(\cdot)\triangleq (q_{1}(\cdot)$ $,q_{2}(\cdot),\cdots,q_{n_q}(\cdot))$, where $q_{j}(y)\triangleq k$ if and only if\footnote{For ease of notation, we have defined $t_{j,0}\triangleq -\infty$ and $t_{j,\ell}\triangleq\infty$.} $ f_{j}(y)\in [t_{j,k},t_{j,k+1}], j\in [n_q], k\in\{0,1,\cdots,\ell-1\}$. The associated partition is defined as:
\begin{align}
\mathsf{P}=\{\mathcal{P}_{\mathbf{c}}, \mathbf{c}\in \{0,1,\cdots,\ell-1\}^{n_q}\}- \Phi, 
\label{def:quant}
\end{align}
\text{ where } $\mathcal{P}_\mathbf{c}= \{y\in\mathbb{R}| \mathbf{q}(y)= \mathbf{c}\}, \mathbf{c}\in \{0,1,\cdots,\ell-1\}^{n_q}$.
\end{Definition}


For a quantizer $\mathbf{q}(\cdot)$, we call $y\in \mathbb{R}$ a \textit{point of transition} if the value of $\mathbf{q}(\cdot)$ changes at input $y$, i.e. if it is a point of discontinuity of $\mathbf{q}(\cdot)$. Let $r$ be a point of transition of $\mathbf{q}(\cdot)$. There must exist output vectors $\mathbf{c}\neq \mathbf{c}'$ and $\epsilon>0$ such that $\mathbf{q}(y)=\mathbf{c}, y\in (r-\epsilon,r)$ and $\mathbf{q}(y)=\mathbf{c}', y\in (r,r+\epsilon)$. So, there exists $j\in [n_q]$ and $k\in [\ell-1]$ such that $f_{j}(y)<t_{j,k}, y\in (r-\epsilon,r)$ and $f_{j}(y)\geq t_{j,k}, y\in (r,r+\epsilon)$, or vice versa; so that $r$ is a root of the function $f_{j,k}(y)\triangleq f_{j}(y)-t_{j,k}$.
Note that there are a total of at most $(\ell-1)  n_q2^\delta$ roots for the collection of functions $f_{j,k}(\cdot)$.  
Let $r_1,r_2,\cdots,r_{\gamma}$ be the sequence of roots of $f_{j,k}(\cdot), j\in [n_q], k\in [\ell-1]$ (including repeated roots), written in non-decreasing order, where $\gamma\leq (\ell-1)  n_q2^\delta$. Let $\mathcal{C}=(\mathbf{c}_0,\mathbf{c}_1,\cdots, \mathbf{c}_{\gamma})$ be the corresponding quantizer outputs, i.e. $\mathbf{c}_{i-1}= \lim_{y\to r_i^-}\mathbf{q}(y), i\in [\gamma]$ and $\mathbf{c}_{\gamma}=\lim_{y\to\infty}\mathbf{q}(y)$. We call $\mathcal{C}$ the \textit{code} associated with the quantizer. Note that the associated code is an ordered set of vectors.
 Each $\mathbf{c}_i= (c_{i,1},c_{i,2},\cdots,c_{i,n_q}), i\in \{0,1,\cdots,\gamma\}$ is called a codeword. The size of the code $|\mathcal{C}|$ is defined as the number of unique vectors in $\mathcal{C}$. For a fixed $j\in [n_q]$, the transition count of position $j$ is the number of codeword indices where the value of the $j$th element changes, and it is denoted by $\kappa_j$, i.e., $\kappa_j\triangleq \sum_{k=1}^{\gamma}\mathbbm{1}(c_{k-1,j}\neq c_{k,j})$.
It is straightforward to see that $|\mathsf{P}|=|\mathcal{C}|$ since both cardinalities are equal to the number of unique outputs the quantizer produces. 
The following example clarifies the definitions given above. 
 
 \begin{figure}[t]
\centering 
\includegraphics[width=0.4\textwidth]{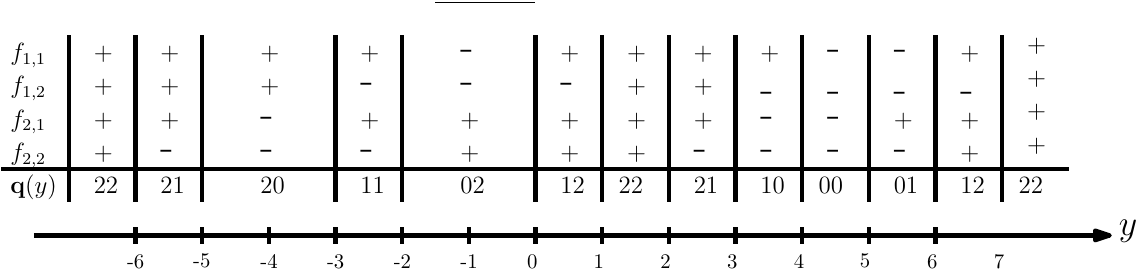}
\vspace{-.05in}
\caption{The quantizer outputs in Examples \ref{ex:0} and \ref{Ex:1}. The first four rows show the sign of $f_{j,k}, j\in [n_q],k\in [\ell-1]$ for $y$ within each interval, and the last row is the quantizer output.
\vspace{-.05in}
}
\vspace{-.2in}
\label{fig:code}
\end{figure}
\begin{Example}[\textbf{Associated Code}]
\label{Ex:1}
Consider the setting of Example \ref{ex:0}. The functions $f_{j,k}, j\in [n_q], k\in [\ell-1]$ are:
\begin{align*}
    &f_{1,1}(y)= \big||y-2|-3\big|-1, \qquad f_{1,2}(y)= \big||y-2|-3\big|-2,\\
    &f_{2,1}(y)= \big||y|-4\big|-1, \qquad f_{2,2}(y)= \big||y|-4\big|-2.\\
\end{align*}

The ordered root sequence of the quantizer is 
$(r_1,r_2,\cdots,r_{13})=
(-6,-5,-3,-2,0,1,2,3,4,5,6,7)$. The 
associated partition is:
\begin{align*}
   & \mathsf{P}= \Big\{(-\infty,-6), [-6,-5),[-5,-3),
 [-3,-2),[-2,0),[0,1),[1,2)
 \\&\qquad [2,3),[3,4),[4,5), [5,6),[6,7),[7,\infty)\Big\}.
\end{align*}
The associated code is given by $22,21,20,11,02,12,$ $22,21,10,00,01,12,22$. This is shown in Figure \ref{fig:code}. The size of the code is $|\mathcal{C}|=13$. The high SNR capacity of a channel using this quantizer at the receiver is $\log{|\mathsf{P}|}=\log{|\mathcal{C}|}=\log{13}$.
\end{Example}
\label{sec:prelim}

\subsection{Single Envelope Detector and One-bit Quantization}
As a first step, we investigate scenarios with one-bit ADCs and  
a single envelope detector per ADC, i.e., 
$\ell=2$ and $\delta=1$. 
Since $\delta=1$, each $f_{j,1}(y)$ is of the form $|y-b_j|$ for some $b_j\in \mathbb{R}$.
 Given an ADC threshold $t_{j,1}>0$ the roots of $f_{j,1}(y)-t_{j,1}$ are equal to $b_j \pm t_{j,1}$. So, there are a total of at most $2n_q$ roots for the quantizer.
Then, the number of Voronoi regions of the quantization, which are segments of the real line separated by these roots, cannot be greater than $2n_q+1$. On the other hand, the first and last  segments always have the same ADC outputs (for an example, refer to Figure \ref{fig:code}). Consequently, the high SNR capacity is upper-bounded by $\log{2n_q}= \log{2\floor{\frac{P_{ADC}}{2\alpha}}}$.  The following proposition shows that this upper-bound is achievable. The proof is provided in \cite{Shirani_JSTSP_2022}. 
\begin{Proposition}
\label{Prop:1}
Let $h\neq0$, $P_{ADC}>0$, $\delta=1$, and $\ell=2$. Then, 
\begin{align}
\label{eq:prop:1}
 \lim_{P_T\to \infty}   C_{env}(P_T,P_{ADC}, h,\delta)= 1+\log{\lfloor\frac{P_{ADC}}{2\alpha}\rfloor}, 
\end{align}
where $2\alpha$ is the unit of power consumption for a one-bit ADC.
\end{Proposition}
The proof is provided in Appendix \ref{App:Prop1}. 

To compare this result with the conventional approach, where analog processing is restricted to linear operators, note that without using envelope detectors and with $n_q$ one-bit ADCs one can partition the real line into at most $n_q+1$ Voronoi regions. Consequently, the capacity of such a system at high SNR is $\log{(n_q+1)}$. In comparison, when using envelope detectors the capacity is $\log{2n_q}$ which yields an improvement. 


The following provides a computable expression for the capacity under general assumptions on SNR. 
\begin{Theorem}
\label{th:1}
Consider a system parametrized by $(P_T,P_{ADC},h,\delta)$, where $P_T,P_{ADC}>0, h\neq 0, n_q>1$, and $\delta=1,\ell=2$. Then, the capacity is given by:
\begin{align}
\label{eq:th:1}
    C_{env}(P_T,P_{ADC},h,\delta)=\max_{\mathbf{x}\in \mathbb{R}^{2n_q+1}} \max_{P_{X}\in \mathcal{P}_{\mathbf{x}}(P_T)} \max_{\mathbf{t}\in \mathbb{R}^{2n_q}} I(X;\widehat{Y}),
\end{align}
where $n_q= \floor{\frac{P_{ADC}}{2\alpha}}$,
$\widehat{Y}= q(hX+N)$, $N\sim \mathcal{N}(0,1)$, $\mathcal{P}_{\mathbf{x}}(P_T)$ is the set of probability distributions defined on $\{x_1,x_2,\cdots,x_{2n_q+1}\}$ such that $\mathbb{E}(X^2)\leq P_T$,
and $q(y)=k$ if $y\in [t_{k},t_{k+1}], k\in [2n_q]$ and $q(y)=0$ if $y>t_{2n_q}$ or $y<t_{1}$.
\end{Theorem}
The proof is provided in Appendix \ref{App:th:1}. 
The theorem shows that, similar to the high-SNR regime, for low-SNRs the effect of using envelope detectors without concatenation is (almost) equivalent\footnote{Note that the effect is not exactly equal to doubling the number of ADCs as the first and last quantization regions do not produce unique ADC outputs when using envelope detectors as can be observed in Figure \ref{fig:code}.} to increasing the number of ADCs from $n_q$ to $2n_q$. 

\subsection{Concatenated Envelope Detectors and Few-bit ADCs}
Next, we consider systems with $\delta>1$ and $\ell>2$. Recall that for $\delta>1$, each $f_j(y),j \in [n_q]$ is of the form $f_j(y)=A_\delta(y,b_j^\delta)=|A_{\delta-1}(y,b_j^{\delta-1})-b_{j,\delta}|$, where $A_1(y,b_{j,1})=|y-b_{j,1}|$. 
To evaluate the resulting channel capacity, we need to characterize the number and shape of Voronoi regions which can be produced using a quantizer consisting of concatenated envelope detectors and few-bit ADCs. To this end, we first introduce the notion of non-degenerate bias vectors (Proposition \ref{prop:2.5}). Then, for envelope detectors with non-degenerate biases, we characterize some key properties of the associated code and the roots of the quantization function (Propositions \ref{prop:1.5} and \ref{Prop:2}). Subsequently, for any given sequence of roots satisfying the aforementioned properties, we provide a code construction mechanism which has that root sequence (Proposition \ref{Prop:3}). This enables us to derive the capacity expression for the communication systems under consideration (Theorem \ref{th:4}).

Some (degenerate) bias vectors lead to redundant envelope detectors. The following example clarifies the notion. 
\color{black}
\begin{Example}[\textbf{Degenerate Bias Vector}]
\label{ex:2}
    Let $n_q=1, \ell=3, \delta=2$, and consider the thresholds $t_{1,1}=1$ and $t_{1,2}=2$. Given a bias vector $(b_1,b_2)$, the associated analog function is $f_{j}(y)= ||y-b_1|-b_2|$. The ADC output is
    \begin{align*}
        q(y)=
        \begin{cases}
            0 \qquad & \text{ if } ||y-b_1|-b_2|<1\\
            1 & \text{ if } 1\leq ||y-b_1|-b_2|< 2
\\            2 & \text{ if } 2\leq ||y-b_1|-b_2|
        \end{cases}.
    \end{align*}
    Note that if $b_2-1<0$, then this would be equivalent with:
        \begin{align*}
        q(y)=
        \begin{cases}
            0 \qquad & \text{ if } |y-b_1|<1+b_2\\
            1 & \text{ if } 1+b_2\leq |y-b_1|<2+b_2
\\            2 & \text{ if } 2+b_2\leq |y-b_1|
        \end{cases}.
    \end{align*}
 In this case, the second envelope detector does not affect the quantization process and can be omitted without change in quantizer output, i.e., the input to the corresponding absolute value is always positive, so it can be removed.
\end{Example}
 The following proposition characterizes the necessary and sufficient conditions for non-degeneracy of bias vectors.
 \begin{Proposition}[\textbf{Non-Degenerate Bias Vectors}]
 \label{prop:2.5}
    Let the threshold vector corresponding to the $j$th ADC be $t^{\ell-1}$, where $j\in [n_q]$. The bias vector $\mathbf{b}$ of the corresponding analog operator $f_j(\cdot)$ is non-degenerate if and only if:
    \begin{align}
    \label{eq:degen}
        0< t_1+\sum_{i=2}^\delta(-1)^{p_i}b_i, \quad \forall p_i \in \{-1,1\}. 
    \end{align} 
 \end{Proposition}
 The proof follows by noting that from definition $0<t_1<t_2<\cdots<t_{\ell-1}$ so that Equation \eqref{eq:degen} guarantees $ 0< t_k+\sum_{i=2}^\delta(-1)^{p_i}b_i$ for all $k\in [\ell-1]$, and 
 is thus sufficient to ensure non-degeneracy. 

In the next step, we introduce the notion of a fully-symmetric vector which will be used in deriving properties of roots of quantizers with non-degenerative bias vectors.
\begin{Definition}[\textbf{Fully-Symmetric Vector}] A vector $\mathbf{m}=(m_1,m_2,\cdots,m_{2^n})$ is called symmetric if $m_i+m_{2^n-i}=m_j+m_{2^n-j}, i,j\in [2^n-1]$.
$\mathbf{m}$ is called fully-symmetric if it is symmetric 
and the vectors  $(m_1,m_2,\cdots,m_{2^{n-1}})$ and $(m_{2^{n-1}+1},m_{2^{n-1}+2},\cdots,m_{2^{n}})$ are both fully-symmetric for $n>2$ and symmetric for $n=2$.  
\end{Definition}
For instance, the vector $\mathbf{m}=(-7,-6,-5-4,4,5,6,7)$ is fully symmetric since it is symmetric and $(-7,-6,-5-4)$
 and $(4,5,6,7)$ are both symmetric. 
\textcolor{black}{As observed in the proof of Theorem \ref{th:1}, a critical step in deriving the channel capacity resulting from a specific constellation of envelope detectors and ADCs is to characterize the number and shape of the Voronoi regions of the quantization operation resulting from the constellation. To this end, the following proposition provides useful properties of the roots which in turn determine the resulting Voronoi regions.}
 \begin{Proposition}
\label{prop:1.5}
     Consider a quantizer $\mathbf{q}(\cdot)$ with threshold matrix $t^{n_q\times (\ell-1)}\in \mathbb{R}^{n_q\times (\ell-1)}$,
and analog processing functions $f_{j}(\cdot)\in \mathcal{F}_{env}^{\delta}, j\in n_q$, such that 
the corresponding bias vectors are non-degenerate and $f_{j,k}(\cdot)\triangleq f_{j}(\cdot)- t_{j,k}, j\in [n_q], k\in [\ell-1]$ do not have repeated roots. Let $r_1,r_2,\cdots,r_{\gamma}$ be the increasing sequence of roots, where $\gamma\triangleq (\ell-1)n_q2^{\delta}$. Then, there exists a partition $\{\mathcal{P}_{j,k}, j\in [n_q], k\in [\ell-1]\}$ of $[\gamma]$ such that 
\\1) $|\mathcal{P}_{j,k}|=2^{\delta}, j\in [n_q], k\in [\ell-1]$.
\\2)  For $j\in [n_q], k\in [\ell-1]$, let $\mathcal{P}_{j,k}=\{i_1,i_2,\cdots, i_{2^{\delta}}\}$, where $i_j<i_{j'}$ for $j<j'$. The vector $(r_{i_1}, r_{i_2}, \cdots, r_{i_{2^\delta}})$ is fully-symmetric, 
\\3) For all $j\in [n_q], k,k'\in [\ell-1]$, we have $r_{i_t}-r'_{i_t}= r_{i'_t}-r'_{i'_t}, i_t,i'_t\in [2^\delta]$, where $r_{i_t},r_{i'_t}\in \mathcal{P}_{j,k}$ and $r'_{i_t},r'_{i'_t}\in \mathcal{P}_{j,k'}$.
 \end{Proposition}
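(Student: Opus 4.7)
My plan is to derive an explicit parametric form for the $f_{j,k}$-roots by recursively unfolding the nested absolute value defining $A_\delta$, and then to read off Parts 1--3 directly from that form.

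I would begin by unfolding $A_\delta(y,a_j^\delta)=t(j,k)$ one absolute value at a time, outermost first. Each step produces a sign choice $\sigma_i\in\{-1,+1\}$, and non-degeneracy ensures that every branch of the recursion yields a non-negative intermediate value, so all $2^\delta$ sign vectors give a valid root of $f_{j,k}$. The substitution $\mu_i=\sigma_1\sigma_2\cdots\sigma_i$, a bijection on $\{-1,+1\}^\delta$, collapses the resulting nested expression into the clean linear form
\[
y(\mu)=a_{j,1}+\mu_1 a_{j,2}+\mu_2 a_{j,3}+\cdots+\mu_{\delta-1}a_{j,\delta}+\mu_\delta t(j,k),\quad \mu\in\{-1,+1\}^\delta.
\]
Defining $\mathcal{P}_{j,k}$ as the set of global indices of these $2^\delta$ values, Part 1 is immediate from the no-repeated-roots hypothesis, which forces the $2^\delta$ values of $y(\mu)$ to be distinct.

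For Part 2 I would induct on $\delta$, showing that the sorted vector of $\{y(\mu)\}$ is fully-symmetric about $a_{j,1}$. Translating by $-a_{j,1}$ and pairing $\mu$ with $-\mu$ gives central symmetry about zero. For the halves-are-fully-symmetric recursion, non-degeneracy (in the form that the dominant coefficient exceeds the sum of absolute values of the remaining ones) implies that fixing the sign of that coefficient cleanly separates the $2^\delta$ roots into a lower and an upper half, each a translate of a signed sum in $\delta-1$ coefficients; the inductive hypothesis then applies to each half and assembles into the full recursive definition.

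For Part 3, the explicit formula makes the passage from $\mathcal{P}_{j,k}$ to $\mathcal{P}_{j,k'}$ transparent: only the last term changes, from $\mu_\delta t(j,k)$ to $\mu_\delta t(j,k')$, while the sorting analysis of Part 2 assigns the same $\mu$-pattern to the $t$-th sorted position of both subsets. Tracking how the shift $t(j,k)-t(j,k')$ propagates through this pattern-preserving correspondence then yields the claimed invariance $r_{i_t}-r'_{i_t}=r_{i'_t}-r'_{i'_t}$. The hard part will be Part 2: verifying that the coefficient-dominance splitting cascades correctly through all $\delta$ nesting levels, so that the halves at every level really are translates of lower-dimensional sign sums, which is what makes the inductive hypothesis available. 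A secondary, mostly notational, difficulty is keeping the within-subset sorted position distinct from the global index so that the identity asserted in Part 3 can be stated and used cleanly in later derivations.
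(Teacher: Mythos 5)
Your overall route is the same as the paper's (the paper's proof is only the one-line remark that one should take $\mathcal{P}_{j,k}$ to be the root set of $f_{j,k}$ and ``use properties of the absolute value''), and your explicit parametrization $y(\mu)=a_{j,1}+\mu_1 a_{j,2}+\cdots+\mu_{\delta-1}a_{j,\delta}+\mu_\delta t(j,k)$ is correct and is the right tool: Part 1 follows immediately. For Part 2, however, a single ``dominant coefficient exceeds the sum of the rest'' is not enough to make the halving cascade through all $\delta$ levels. What non-degeneracy actually gives you (every intermediate value $a_{j,s}+\sigma_s(a_{j,s+1}+\cdots)$ must be nonnegative on every branch) is the full chain $a_{j,2}>\sum_{i=3}^{\delta}|a_{j,i}|+t(j,\ell-1)$, $a_{j,3}>\sum_{i=4}^{\delta}|a_{j,i}|+t(j,\ell-1)$, \dots, $a_{j,\delta}>t(j,\ell-1)$, so the sorted order is lexicographic in $(\mu_1,\dots,\mu_\delta)$ and every level splits cleanly. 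Note that in this chain the threshold is the \emph{least} dominant coefficient, which is the opposite of what a literal reading of condition \eqref{eq:degen} suggests; if you take the threshold as dominant, the second level of the cascade fails (e.g.\ coefficients $3,2.5,2$ with threshold $8$ give a half whose quarter sums disagree). The cleanest execution is to induct by peeling the \emph{outermost} modulus $|y-a_{j,1}|$, writing the root set as $a_{j,1}\pm\{v\}$ where $\{v\}$ is the positive root set of the depth-$(\delta-1)$ problem.

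The genuine gap is Part 3. Your own formula contradicts the invariance you claim to derive: under the order-preserving correspondence between sorted positions, the $m$-th roots of $f_{j,k}$ and $f_{j,k'}$ differ by $\mu_\delta(m)\,\bigl(t(j,k)-t(j,k')\bigr)$, and since $t(j,\cdot)$ is the least dominant coefficient, $\mu_\delta$ flips between consecutive sorted positions, so this difference alternates in sign rather than being constant. Already for $\delta=1$ the roots are $a_{j,1}\pm t(j,k)$ versus $a_{j,1}\pm t(j,k')$, giving differences $t(j,k')-t(j,k)$ and $t(j,k)-t(j,k')$. Only the absolute difference $|t(j,k)-t(j,k')|$ is position-independent. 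So ``tracking how the shift propagates'' cannot yield Part 3 as literally stated; you would need to prove the corrected statement (equality of $|r_{i_t}-r'_{i_t}|$ across positions, or equality with an explicit alternating sign) and flag that the proposition requires this amendment. Asserting that the claimed invariance follows is the step that would fail.
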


The proof follows by taking each $\mathcal{P}_{j,k}$ to be the ordered set of roots of $f_{j,k}(\cdot)$ for a given $j\in [n_q], k\in [\ell-1]$ and using properties of the absolute value operator. The following proposition states several useful properties for the code associated with a quantizer $\mathbf{q}(\cdot)$.
\begin{Proposition}[\textbf{Properties of the Associated Code}] 
\label{Prop:2}
Consider a quantizer $\mathbf{q}(\cdot)$ with threshold matrix ${t}^{n_q\times (\ell-1)}\in \mathbb{R}^{n_q\times (\ell-1)}$ such that $0<t_{i,j}<t_{i,j'}, i\in [n_q], j,j'\in [\ell-1], j<j'$,
and analog processing functions $f_{j}(\cdot), j\in n_q$, such that 
the corresponding bias vectors are non-degenerate and $f_{j,k}(\cdot)\triangleq f_{j}(\cdot)- t_{j,k}, j\in [n_q], k\in [\ell-1]$ do not have repeated roots. The associated code $\mathcal{C}$ satisfies the following:
\begin{enumerate}[leftmargin=*]
\color{black}
\item The number of codewords in $\mathcal{C}$ is equal to $\gamma+1$, where $\gamma\triangleq (\ell-1) n_q2^\delta $, i.e. $\mathcal{C}=(\mathbf{c}_0, \mathbf{c}_1,\cdots, \mathbf{c}_{\gamma})$.
    \item All elements of the first codeword $\mathbf{c}_0$ are equal to $\ell-1$, i.e. $c_{0,j}=\ell-1, j\in [n_q]$.
    \item Consecutive codewords differ in only one position, and their $L_1$ distance is equal to one, i.e. $\sum_{j=1}^{n_q}|c_{i,j}-c_{i+1,j}|=1, i\in \{0,1,\cdots,\gamma-1\}$.
    \item  The transition count at every position is $\kappa_j= \frac{\gamma}{n_q}= (\ell-1)2^\delta, j\in [n_q]$.
    \item For $j\in [n_q]$, let $i_1,i_2,\cdots, i_{\kappa_j}$ be the non-decreasingly ordered indices of codewords, where the $j$th element has value-transitions. Then, the sequence $(c_{i_1,j},c_{i_2,j},\cdots,c_{i_{\kappa_j},j})$ is periodic, in each period it takes all values between $0$ and $\ell-1$, and  $|c_{i_k,j}-c_{i_{k+1},j}|=1, k\in [\kappa_j-1]$ holds. 
    \item  $|\mathcal{C}|\leq min(\ell^{n_q}, (\ell-1)n_q2^{\delta})$.
    
\end{enumerate}

\end{Proposition}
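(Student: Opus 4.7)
The plan is to exploit two structural features of each analog function $f_j(y)=A_\delta(y,a_j^\delta)$: (i) it is a continuous, piecewise linear function of $y$ whose slopes are all $\pm 1$, and (ii) $f_j(y)\to+\infty$ as $|y|\to\infty$. Combined with Proposition~\ref{prop:1.5}, which guarantees that each $f_{j,k}$ has exactly $2^\delta$ roots arranged in a fully-symmetric pattern, these features suffice to pin down all six assertions. Items~1 and~2 are the quickest. For item~1, summing the root count from Proposition~\ref{prop:1.5} over $(j,k)\in[n_q]\times[\ell-1]$ yields $(\ell-1)n_q 2^\delta$ roots in total, and the no-repeated-roots hypothesis makes all of them distinct; they therefore partition $\mathbb{R}$ into $(\ell-1)n_q 2^\delta+1$ open intervals on each of which $Q$ is constant. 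For item~2, property~(ii) applied at $-\infty$ forces $f_j(y)>t(j,\ell-1)$ for every $j$ when $y$ is sufficiently negative, so $Q_j(y)=\ell-1$; the symmetric argument at $+\infty$ yields $\mathbf{c}_{\gamma-1}=\mathbf{c}_0$, a fact I will reuse for item~6.

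For item~3, at each root $r_i$ exactly one function $f_{j^*,k^*}$ changes sign, and because $f_{j^*}$ is continuous and strictly monotone through the crossing by property~(i), its image crosses exactly one threshold level. Hence $Q_{j^*}$ shifts by one while every other coordinate $Q_{j'}$ is unchanged, so consecutive codewords have $L_1$-distance equal to one. Item~4 is then immediate: the transitions at coordinate $j$ are in bijection with the union of roots of $\{f_{j,k}\}_{k\in[\ell-1]}$, which by Proposition~\ref{prop:1.5}(1) has cardinality $(\ell-1)2^\delta$, so $\kappa_j=(\ell-1)2^\delta$.

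Item~5 is the most delicate step, and where the fully-symmetric property of Proposition~\ref{prop:1.5} really earns its keep. At fixed~$j$, the map $f_j$ is a continuous zigzag with slopes $\pm 1$, so between two consecutive local extrema of $f_j$ it is strictly monotone; on each such monotone arc, $Q_j$ either increases or decreases through a block of consecutive integer levels. Using parts~(2) and~(3) of Proposition~\ref{prop:1.5}, one can pair the roots of $f_{j,k}$ with those of $f_{j,k'}$ so that on every monotone arc the thresholds $t(j,1),t(j,2),\ldots,t(j,\ell-1)$ are crossed either in the natural ordered sequence or in reverse; this forces each arc to sweep $Q_j$ from one endpoint of $\{0,\ldots,\ell-1\}$ to the other. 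The induced sequence of values at coordinate~$j$ is therefore the periodic pattern $\ell-1,\ell-2,\ldots,1,0,1,\ldots,\ell-1,\ell-2,\ldots$, which gives periodicity, unit increments, full coverage of $\{0,\ldots,\ell-1\}$, and $c_{i_1,j}\in\{0,\ell-1\}$ (using that $Q_j=\ell-1$ just before the first transition, by item~2).

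Item~6 then comes for free: $|\mathcal{C}|\leq\ell^{n_q}$ since each codeword lies in $[\ell]^{n_q}$, while $|\mathcal{C}|\leq(\ell-1)n_q 2^\delta$ follows because the codeword sequence has $(\ell-1)n_q 2^\delta+1$ terms but $\mathbf{c}_0=\mathbf{c}_{\gamma-1}$ by item~2 applied at both tails, forcing at least one repetition. The main obstacle is the ``monotone-arc sweeps all thresholds in order'' step in item~5: making it fully rigorous requires precise bookkeeping with the fully-symmetric root sets $\mathcal{P}_{j,k}$ from Proposition~\ref{prop:1.5} in order to identify which root of which $f_{j,k}$ sits on which monotone arc of $f_j$ and in what order, and to verify that this structure is consistent across all $k\in[\ell-1]$.
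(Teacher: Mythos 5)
Your proposal is correct and follows essentially the same route as the paper, which justifies each item by the same observations: counting the $(\ell-1)n_q2^{\delta}$ distinct roots, taking limits at $\pm\infty$ for the first/last codeword, noting that each root triggers a unit transition in exactly one ADC output, and attributing the $\kappa_j$ count to the roots of $f_{j,k}$ for fixed $j$. Your treatment of item~5 via the monotone arcs of the piecewise-linear zigzag is in fact more careful than the paper's one-line justification, and the needed fact that each arc sweeps all thresholds follows from non-degeneracy forcing every local minimum of $f_j$ below $t(j,1)$ and every interior local maximum above $t(j,\ell-1)$.
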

\color{black}
 Proposition \ref{Prop:2} is an extension of the properties shown in the proof of Theorem \ref{th:1}. We provide a brief justification of each property in the following. Property 1 follows by the fact that the number of codewords in $\mathcal{C}$ is one more than  the number of roots of $f_{j,k}, j\in [n_q], k\in [\ell-1]$. 
 Property 2 follows from the fact that as $y\to \pm\infty$, its absolute value becomes asymptotically large.  
 Property 6 follows from the facts that i) $n_q$ ADCs with $\ell$ output levels cannot output more than $\ell^{n_q}$ distinct outputs, ii) the number of roots cannot exceed $(\ell-1)n_q2^{\delta}$, and iii) the first and last decision regions have the same ADC outputs. Properties 3 and 5 follow by the fact that each root of  $f_{j,k}, j\in [n_q], k\in [\ell-1]$ corresponds to a value transition in the output of exactly one of the ADCs (since the roots are not repeated) and at each transition the value changes either one unit up or down since in the input crosses one threshold at a time and its value is changed continuously. Property 4 follows by the fact that the transition count at each position is equal to the number of roots of $f_{j,k}, k\in [\ell-1]$ for a fixed $j\in [n_q]$.
\color{black}
 
As a step towards characterizing capacity when $\ell>2,\delta>1$, we first study the capacity region for systems with one-bit ADCs, i.e.,  $\ell=2,\delta>1$. To this end, we prove two useful propositions. The first one shows that given an ordered set $\mathcal{C}$ satisfying the properties in Proposition \ref{Prop:2} and a sequence of real numbers $(r_1,r_2,\cdots,r_{\gamma})$ satisfying the properties in Proposition \ref{prop:1.5}, one can always construct a quantizer whose associated code is equal to $\mathcal{C}$ and whose  roots sequence is $(r_1,r_2,\cdots,r_{\gamma})$. The second proposition provides conditions under which there exists a code satisfying the properties in Proposition \ref{Prop:2}. \textcolor{black}{The proof follows by construction, using techniques in balanced and locally balanced Gray codes \cite{bykov2016locally}. The constructive proof is detailed in \cite{Shirani_JSTSP_2022}.}
Combining the two results allows us to characterize the necessary and sufficient conditions for existence of quantizers with desirable properties.

For a given code $\mathcal{C}$, we denote $\xi_k, k\in [\gamma]$ as the bit position which is different between $\mathbf{c}_{k-1}$ and $\mathbf{c}_{k}$. We define
the transition sets $\mathcal{I}_j\triangleq\{k| \xi_k=j\}, j\in [n_q]$. Note from Property 5 in Proposition \ref{Prop:2}, we have $|\mathcal{I}_j|=\kappa_j= 2^{\delta}$.
\color{black}
\begin{Proposition}[\textbf{Quantizer Construction}]
\label{Prop:3}
Let $\ell=2, n_q\in \mathbb{N}$, and $\delta>1$ such that $n_q2^{\delta}\leq 2^{n_q}$, and consider an ordered set $\mathcal{C}\subset \{0,1\}^{n_q}$ satisfying properties 1-5 in Proposition \ref{Prop:2}, and a sequence of increasing real numbers $r_1,r_2,\cdots, r_{\gamma}$, where $\gamma= n_q2^{\delta}$,  such that $(r_{i_k}, k\in \mathcal{I}_j)$ is fully-symmetric for all $j\in [n_q]$, where $\mathcal{I}_j$ are the transition sets of $\mathcal{C}$. Then, there exists a quantizer $\mathbf{q}(\cdot)$ with associated functions $f_{j}(\cdot)\in \mathcal{F}^{\delta}_{env}, j\in [n_q]$ such that its code is $\mathcal{C}$, and $r_1,r_2,\cdots, r_{\gamma}$ is the non-decreasing sequence of its associated roots. 
\end{Proposition}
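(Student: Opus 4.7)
The strategy is to decouple the construction across the $n_q$ ADCs: for each $j\in[n_q]$, the transition set $\mathcal{I}_j$ isolates the $2^\delta$ roots of the (single) analog-minus-threshold function $f_{j,1}(y)=A_\delta(y,a_j^\delta)-t(j,1)$ that is associated with the $j$th ADC. Since $\ell=2$, each ADC has a single threshold, and I would first produce, for each $j$, a bias vector $a_j^\delta\in\mathbb{R}^\delta$ and a threshold $t(j,1)>0$ whose corresponding $2^\delta$-element root set is exactly $\{r_{i_s}:s\in\mathcal{I}_j\}$. After that, I would verify that collecting these $n_q$ analog functions into a single quantizer $Q(\cdot)$ yields the prescribed associated code $\mathcal{C}$ and the prescribed sequence of roots $r_1<\cdots<r_\gamma$.

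The core inversion lemma I would need is the following: unrolling $A_\delta(y,a_j^\delta)=t(j,1)$ gives that $y$ is a root if and only if $y=a_{j,1}+\sum_{i=2}^{\delta}\epsilon_{i}a_{j,i}+\epsilon_{\delta+1}t(j,1)$ for some signs $(\epsilon_2,\ldots,\epsilon_{\delta+1})\in\{\pm 1\}^{\delta}$, producing a $2^\delta$-element set. I would prove by induction on $\delta$ that \emph{every} fully-symmetric sequence $b_1<b_2<\cdots<b_{2^\delta}$ can be matched uniquely to parameters $(a_{j,1},\ldots,a_{j,\delta},t(j,1))$. The base case $\delta=1$ is immediate. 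For the induction step, the overall symmetry forces $a_{j,1}=(b_1+b_{2^\delta})/2$, after which subtracting $a_{j,1}$ turns the two halves into reflected copies of a single fully-symmetric sequence of length $2^{\delta-1}$; the induction hypothesis applied to the (translated) right half gives $(a_{j,2},\ldots,a_{j,\delta},t(j,1))$. The strict monotonicity $b_k<b_{k+1}$ guarantees $t(j,1)>0$ and yields the inequality \eqref{eq:degen}, so the constructed bias vector is non-degenerate.

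With the $(a_j^\delta,t(j,1))$ in hand for all $j\in[n_q]$, I would verify correctness as follows. By construction, the union of the root sets over $j$ is $\{r_1,\ldots,r_\gamma\}$ in order. Between two consecutive roots $r_k$ and $r_{k+1}$, none of the $f_{j,1}$ changes sign, so $Q(\cdot)$ is constant on $(r_k,r_{k+1})$; crossing $r_k$ changes the sign of exactly one $f_{j,1}$, namely the one whose index is $\xi_k$ (the position that differs between $\mathbf{c}_{k-1}$ and $\mathbf{c}_k$ per Property 3 of Proposition~\ref{Prop:2}), so the $\xi_k$th coordinate of $Q$ flips while the others are unchanged. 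Finally, for $y\to-\infty$, each $A_\delta(y,a_j^\delta)\to\infty$ exceeds every $t(j,1)$, so $Q_j(y)=1=\ell-1$, matching the initial codeword $\mathbf{c}_0$ from Property 2. Since the starting codeword and the ordered sequence of flipped positions uniquely determine a binary codeword sequence, the code produced by $Q(\cdot)$ agrees with $\mathcal{C}$ termwise.

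The main obstacle is the inversion step and its interaction with the assignment of the roots $\{r_{i_s}:s\in\mathcal{I}_j\}$ to sign patterns. The fully-symmetric hypothesis is exactly what enables the recursion to terminate with a non-degenerate bias vector and a positive threshold, and without it one cannot realize a prescribed $2^\delta$-element root set via a $\delta$-fold envelope detector. The remaining bookkeeping---matching the order in which transitions of different positions occur (dictated by the interleaving of the sets $\mathcal{I}_1,\ldots,\mathcal{I}_{n_q}$) with the order in which the $f_{j,1}$ cross zero on the real line---is automatic because the roots were placed at precisely the prescribed locations $r_{i_s}$.
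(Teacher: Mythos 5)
Your proof is correct and follows essentially the same route as the paper's: both decouple the construction ADC-by-ADC and invert each fully-symmetric root subsequence into a bias vector and positive threshold, the only difference being that you obtain the parameters by a midpoint-extraction-and-halving recursion while the paper writes the same parameters in closed form ($a_{1,j}$ as the midpoint of the extreme roots, the later $a_{s,j}$ as midpoints of sub-blocks minus the accumulated sum). Your explicit verification that the resulting quantizer's associated code equals $\mathcal{C}$ (constancy between consecutive roots, a single coordinate flip at each root, and the all-ones codeword at $-\infty$) is a step the paper asserts without detail, and it is a sound addition.
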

\color{black}

\begin{Proposition}(\textbf{Code Construction})
\label{Prop:6}
Let $\ell=2$, $n_q\in \mathbb{N}$, and $\kappa_1$, $\kappa_2,\cdots,\kappa_{n_q}$ be even numbers such that $|\kappa_j-\kappa_{j'}|\leq 2, j,j'\in [n_q]$. Then, there exists a code $\mathcal{C}$ with transition count at position j equal to $\kappa_j, j\in [n_q]$ satisfying properties 1, 2, 3, and 5 in Proposition \ref{Prop:2} such that $|\mathcal{C}|=\min\{2^{n_q}, \sum_{j=1}^{n_q}\kappa_j\}$. Particularly, if $\kappa_j=2^\delta, j\in [n_q]$, then there exists $\mathcal{C}$ with $|\mathcal{C}|= \min\{2^{n_q}, n_q2^\delta\}$ satisfying properties 1-5 in Proposition \ref{Prop:2}.
\label{Prop:4}
\end{Proposition}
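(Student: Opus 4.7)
The plan is to recast the problem in terms of a transition sequence. Any candidate code $\mathcal{C}=(\mathbf{c}_0,\mathbf{c}_1,\ldots,\mathbf{c}_L)$ with $L=\sum_{j=1}^{n_q}\kappa_j$ is uniquely determined by specifying $\mathbf{c}_0=(1,1,\ldots,1)$ together with a transition sequence $(\xi_1,\ldots,\xi_L)\in [n_q]^L$ in which symbol $j$ appears exactly $\kappa_j$ times, via the rule $\mathbf{c}_k=\mathbf{c}_{k-1}\oplus \mathbf{e}_{\xi_k}$. Under this reduction, Property~2 is built in, Property~3 holds because only the single bit $\xi_k$ changes at step $k$, and Property~5 is automatic for $\ell=2$ since each bit merely toggles between $0$ and $1$ at its transition indices and returns to its initial value $1$ after $\kappa_j$ (even) flips. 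Consequently $\mathbf{c}_L=\mathbf{c}_0$ and Property~1 collapses to the statement that the sequence has exactly $L+1$ entries, leaving only the cardinality requirement $|\{\mathbf{c}_0,\ldots,\mathbf{c}_{L-1}\}|=\min\{2^{n_q},L\}$ to be engineered through the choice of transition sequence.

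At this point the task becomes graph-theoretic on the hypercube $Q_{n_q}$: construct a closed walk starting at $(1,\ldots,1)$ that uses the $j$-th parallel edge class exactly $\kappa_j$ times and visits $\min\{2^{n_q},L\}$ distinct vertices. I would split the argument into two regimes. Regime~(i), $L\leq 2^{n_q}$, calls for a simple closed walk of length $L$ with prescribed class sizes $\kappa_j$. Regime~(ii), $L>2^{n_q}$, calls for a closed walk that visits every vertex of $Q_{n_q}$ while absorbing $L-2^{n_q}$ redundant flips.

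For Regime~(i), the approach is by induction on $n_q$, adapting the balanced Gray-code construction of Bhat--Savage \cite{bhat1996balanced}. The base case $n_q=1$ forces $\kappa_1=2$ and the two-step cycle $1\to 0\to 1$ suffices. For the inductive step, pick a position $j^\star$ with minimal $\kappa_{j^\star}$, and plan to partition the target cycle into $\kappa_{j^\star}$ arcs separated by flips of bit $j^\star$. Each arc lives in one of the two $(n_q-1)$-dimensional sub-cubes determined by the value of bit $j^\star$, and the inductive hypothesis --- applied to a residual transition-count vector obtained by removing bit $j^\star$ and redistributing the remaining flips between the two sub-cubes --- produces the arcs. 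The hypothesis $|\kappa_j-\kappa_{j'}|\leq 2$ is precisely what preserves the balance condition after this surgery. For Regime~(ii), I would first invoke Regime~(i) to obtain a Hamiltonian cycle of length $2^{n_q}$ in $Q_{n_q}$ with transition counts $h_j\leq\kappa_j$ that are within $2$ of one another, and then pad it with $(\kappa_j-h_j)/2$ back-and-forth pumps of each bit $j$ inserted at suitable intermediate vertices; these pumps consume the residual even excess $\kappa_j-h_j$ without introducing new vertices. The specialization $\kappa_j=2^\delta$ in the last sentence of the proposition then follows as an immediate corollary, and additionally satisfies Property~4 by uniformity.

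The main obstacle is the inductive splitting step in Regime~(i): after stripping bit $j^\star$, one must simultaneously certify that the two residual transition-count vectors on the sub-cubes (a) still satisfy the $|\cdot|\leq 2$ balance hypothesis, (b) admit simple arcs of the prescribed aggregate lengths inside their respective sub-cubes, and (c) can be paired via $\kappa_{j^\star}$ cross-cube edges so that the glued walk is globally self-avoiding. Condition (c) is the subtlest; it is ultimately a matching and parity argument on the endpoints of the sub-cube arcs, analogous to the core lemma underlying the locally balanced Gray codes of \cite{bykov2016locally}, and constitutes the combinatorial heart of the proof.
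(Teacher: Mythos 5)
Your reduction to transition sequences is correct and matches the paper's framing: with $\mathbf{c}_0=(1,\dots,1)$ and single-bit flips, Properties 2), 3), and 5) are automatic for $\ell=2$, and the whole content of the proposition is the cardinality claim, i.e., constructing a closed walk on the hypercube with prescribed edge-class multiplicities that visits $\min\{2^{n_q},\sum_j\kappa_j\}$ distinct vertices. The gap is in your Regime (i) induction. Your induction hypothesis produces a single closed self-avoiding walk from a fixed vertex with prescribed transition counts; but after stripping one coordinate $j^\star$, what you actually need in each of the two $(n_q-1)$-subcubes is a family of vertex-disjoint \emph{open} arcs whose edge-class usages sum to the residual counts, whose endpoints can be matched across the subcubes by $\kappa_{j^\star}$ bit-$j^\star$ edges, and whose union is globally self-avoiding. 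That is a strictly stronger object than the hypothesis supplies: the per-subcube residual counts need not be even or balanced (an open arc flips the parity of every coordinate it uses an odd number of times), and the endpoint-matching you label condition (c) is precisely where the construction must do real work. You explicitly defer it as the ``combinatorial heart,'' so the proof is incomplete at its central step.

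The paper's proof avoids this matching problem entirely by peeling \emph{two} coordinates at a time, following the Bhat--Savage scheme \cite{bhat1996balanced}: each codeword $\mathbf{c}'_i$ of the smaller code $\mathcal{C}'$ on $n_q-2$ positions is expanded into up to four codewords by cycling the leading bit pair through $00,01,11,10$, so distinct codewords of $\mathcal{C}'$ yield disjoint quadruples and self-avoidance is automatic; the counts $\kappa_1,\kappa_2$ are spent two at a time, and $\kappa_2-\kappa_1\le 2$ together with $\kappa_2\le\sum_{j\ge 3}\kappa_j$ guarantees the interleaving terminates with the correct parities before $\mathcal{C}'$ is exhausted. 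Your Regime (ii) also has a smaller unaddressed point that the paper handles explicitly: a balanced Gray code has transition counts equal to $2\lfloor 2^{n_q}/(2n_q)\rfloor$ or $2\lceil 2^{n_q}/(2n_q)\rceil$, and some $\kappa_j$ may fall below the larger value, so one must argue (by sorting both count sequences and using $\sum_j\kappa_j\ge 2^{n_q}$ with spread at most $2$ on each side) that the coordinates can be permuted so that the Gray code's counts are dominated entrywise by the $\kappa_j$ before padding with back-and-forth flips. To repair your argument, either adopt the two-coordinate peeling, or prove the arc-decomposition and endpoint-matching lemma you invoke from \cite{bykov2016locally}; as written, neither is established.
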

The proof of Propositions \ref{Prop:3} and \ref{Prop:4} are provided in Appendix \ref{App:Prop:5} and Appendix \ref{App:Prop:6}, respectively.
Using Propositions \ref{Prop:3} and \ref{Prop:4}, we characterize the channel capacity for $\ell=2$ and $\delta>1$. 
Let us define $\gamma'\triangleq \min(2^{n_q}, n_q2^\delta )$ and the set $\mathcal{T}_{\gamma'}\subseteq \mathbb{R}^{\gamma'}$ as the set of  sequences of increasing real numbers $r_1,r_2,\cdots, r_{\gamma'}$ for which  there exists a partition $\{\mathcal{I}_j, j\in[n_q]\}$ of $[\delta]$
such that $(r_{i_k}, k\in \mathcal{I}_j)$ is fully-symmetric for all $j\in [n_q]$, and there exists a code satisfying Properties 1-5 in Proposition \ref{Prop:2} whose transition sets are equal to $\mathcal{I}_j, j\in [n_q]$ and which has exactly one repeated codeword, i.e., only the first and last codewords are repeated. The following theorem characterizes the channel capacity.
\color{black}
\begin{Theorem}[\textbf{SISO Capacity with One-bit ADCs and Multi-level Envelope Detectors}]
\label{th:2}
Consider a system parametrized by $(P_T,P_{ADC},h,\delta)$, where $P_T,P_{ADC}>0, h\in \mathbb{R}$, $\delta>1$, and let $\ell=2$. Then, the capacity is given by:
\begin{align}
\label{eq:th:2}
    C_{env}(P_T,P_{ADC},h,\delta)=
   \max_{\mathbf{x}\in \mathbb{R}^{ \gamma'+1}} \max_{P_{X}\in \mathcal{P}_{\mathbf{x}}(P_T)} \max_{\mathbf{t}\in \mathcal{T}_{\gamma'}} I(X;\widehat{Y}),
\end{align} 
where $n_q=\floor{\frac{P_{ADC}}{2\alpha}}$, 
$\widehat{Y}= q(hX+N)$, $N\sim \mathcal{N}(0,1)$, $\mathcal{P}_{\mathbf{x}}(P_T)$ is the set of distributions on  $\{x_1,x_2,\cdots,x_{\gamma'+1}\}$ such that $\mathbb{E}(X^2)\leq P_T$,
and $q(y)=k$ if $y\in [t_{k},t_{k+1}], k\in \{1,\cdots,\gamma'-1\}$ and $q(y)=0$ if $y>t_{\gamma'-1}$ or $y<t_{1}$.
\end{Theorem}
\color{black}
\textcolor{black}{To provide insights into the implications of the above theorem, let us consider SISO systems where the ADC thresholds are restricted to be fully-symmetric. Then, the theorem implies that  the effect of using $\delta$ concatenated envelope detectors is (almost) equivalent to increasing the number of ADCs exponentially from $n_q$ to $2^{\delta}n_q$. }
The proof follows by similar arguments as in the proof of Theorem \ref{th:1}. The converse follows from Proposition \ref{Prop:2} Item 4). Achievability follows from Proposition \ref{Prop:4}.

The region given in Theorem \ref{th:2} is difficult to analyze since finding the set $\mathcal{T}_{\gamma'}$ may be computationally complex. Inner bounds to the achievable region may be numerically derived by assuming additional symmetry restriction such as uniform quantization restrictions. This is studied in more detail in the numerical evaluations provided in Section \ref{sec:num}.


For scenarios with $\ell>2$ and $\delta>1$,
let us define $\gamma_\ell\triangleq \min(\ell^{n_q}, (\ell-1)n_q\ell^\delta)$ and the set $\mathcal{T}_{\gamma_\ell}\subseteq \mathbb{R}^{\gamma_\ell}$ as the set of  sequences of increasing real numbers $r_1,r_2,\cdots, r_{\gamma_{\ell}}$ satisfying the properties in Proposition \ref{prop:1.5}, for which there exists a code $\mathcal{C}$ satisfying Properties 1-5 in Proposition \ref{Prop:2} such that  i)
the sets $\mathcal{P}_{j,k}, j\in [n_q], k\in [\ell-1]$ in Proposition \ref{prop:1.5} are the indices of the codewords of $\mathcal{C}$ which have transition to or from value $k$ in their $j$th element, and ii) $\mathcal{C}$ has exactly one repeated codeword, i.e., only the first and last codewords are repeated. 
The following theorem characterizes the channel capacity. The proof follows from Propositions \ref{Prop:2} and \ref{Prop:4} similar to the arguments given in the proof of Theorem \ref{th:1}.

\color{black}
\begin{Theorem}[\textbf{SISO Capacity with Few-bit ADCs and Multi-level Envelope Detectors}]
\label{th:4}
Consider a system parametrized by $(P_T,P_{ADC},h,\delta)$, where $P_T,P_{ADC}>0, h\in \mathbb{R}$, and $\delta\in \mathbb{N}$. Then, 
\begin{align}
\label{eq:th:4}
\!\!\! C_{env}(P_T,P_{ADC},h,\delta) =
 \max_{\substack{(n_q,\ell):\\ \alpha\ell n_q\leq P_{ADC}}}
 \max_{\mathbf{x}\in \mathbb{R}^{\gamma_{\ell}+1}} \max_{P_{X}\in \mathcal{P}_{\mathbf{x}}} \max_{\mathbf{t}\in \mathcal{T}_{\gamma_{\ell}}} I(X;\widehat{Y}),
\end{align}
where 
$\widehat{Y}= q(hX+N), N\sim \mathcal{N}(0,1)$, $\mathcal{P}_{\mathbf{x}}(P_T)$ is the set of distributions on  the alphabet $\{x_1,x_2,\cdots,x_{\gamma_{\ell}+1}\}$ such that $\mathbb{E}(X^2)\leq P_T$,
and $q(y)=k$ if $y\in [t_{k},t_{k+1}], k\in [\gamma_{\ell}-1]$ and $q(y)=0$ if $y>t_{\gamma_{\ell}-1}$ or $y<t_{1}$.\end{Theorem}
\color{black}
\color{black}
\section{SISO Systems with Polynomial Operators}
\label{sec:poly}
 In this section, we evaluate the channel capacity when polynomial operators are used instead of envelope detectors. 
\subsection{Analog Polynomial Operators}  
We restrict the set of implementable nonlinear analog functions to the set of polynomials of degree at most $\delta$, that is: 
 \[\mathcal{F}^{\delta}_{poly}\triangleq \{f(y)=\sum_{i=0}^\delta{b_i}y^i, y\in \mathbb{R}| b_i \in \mathbb{R}, i\in [\delta]\},\quad \delta\in \mathbb{N},\]
 For $P_T,P_{ADC}>0$ and $h\neq 0$, 
 the resulting channel capacity is represented by $C_{poly}(P_T,P_{ADC},{h},\delta)$.
Similar to the case with envelope detectors, here the limitation on the degree of the polynomial functions is determined by limitations in the corresponding circuits design and is studied in Section \ref{sec:cir}. We define the quatnizer and associate code in a similar manner as in the previous section. This is clarified through the following example. 
\color{black}


\begin{Example}[\textbf{Associated Code}]
\label{Ex:1p}
Let $n_q=\delta=2$ and $\ell=3$ and consider a quantizer characterized by polynomials
$f_{1}(y)=y^2+2y$ and $f_{2}(y)= y^2+3y, y\in \mathbb{R}$, and thresholds
\begin{align*}
& t_{1,1}= 0, \quad t_{1,2}=3, \quad t_{2,1}= 10, \quad t_{2,2}= 18,
\end{align*}
We have:
\begin{align*}
    & f_{1,1}(y)= y^2+2y, \quad f_{1,2}(y)= y^2+2y-3\\
    & f_{2,1}(y)=y^2+3y-10, \quad  f_{2,2}(y)= y^2+3y-18.
\end{align*}
The ordered root sequence is $(r_1,r_2,\cdots,r_8)=
(-6,$ $-5,-3,-2,0,1,2,3)$. The 
associated partition is:
\begin{align*}
   & \mathsf{P}= \Big\{(-\infty,-6), [-6,-5),[-5,-3),
 [-3,-2),[-2,0),
 \\&\qquad [0,1),[1,2),[2,3),[3,\infty)\Big\}.
\end{align*}
The associated code is given by $22,21,20,10,$ $00,10,20,21,22$. The size of the code is $|\mathcal{C}|=5$. The high SNR capacity of a channel using this quantizer at the receiver is $\log{|\mathsf{P}|}=\log{|\mathcal{C}|}=\log{5}$.
\end{Example}

The following provides the SISO capacity with few-bit ADCs and polynomial operators. The proof is provided in \cite{Shirani_JSTSP_2022}.
\color{black}
\begin{Theorem}[\textbf{SISO Capacity with Few-bit ADCs and Polynomial Analog Operators}]
\label{th:5}
    Consider a system parametrized by $(P_T,P_{ADC},h,\delta)$, where $P_T,P_{ADC}>0, h\neq 0$. For even-valued $\delta$, we have:
\begin{align}
\label{eq:th:5}
    C_{poly}(P_T,P_{ADC},h,\delta)= \max_{\substack{(n_q,\ell):\\ \alpha \ell n_q\leq P_{ADC}}}\max_{\mathbf{x}\in \mathbb{R}^{ \gamma+1}} \max_{P_{X}\in \mathcal{P}_{\mathbf{x}}(P_T)} \max_{\mathbf{t}\in \mathbb{R}^{\gamma}} I(X;\widehat{Y}),
\end{align}
and for odd-valued $\delta$, we have:
\begin{align}
\label{eq:th:3}
&\max_{(n_q,\ell): \alpha \ell n_q\leq P_{ADC}}
 \max_{\mathbf{x}\in \mathbb{R}^{\gamma+1}} \max_{P_{X}\in \mathcal{P}_{\mathbf{x}}(P_T)} \max_{\mathbf{t}\in \mathbb{R}^{\gamma}} I(X;\widehat{Y})  \leq C_{poly}(P_T,P_{ADC},h,\delta)
 \\&\nonumber\qquad \qquad  \leq 
 \max_{(n_q,\ell): \alpha \ell n_q\leq P_{ADC}}\max_{\mathbf{x}\in \mathbb{R}^{\gamma'+1}} \max_{P_{X}\in \mathcal{P}_{\mathbf{x}}(P_T)} \max_{\mathbf{t}\in \mathbb{R}^{\gamma'}} I(X;\widehat{Y'}),
\end{align}
 where $n_q= \floor{\frac{P_{ADC}}{2\alpha}}$, $\gamma\triangleq \min(\ell^{n_q}, (\ell-1)\delta n_q)$, 
$\widehat{Y}= q(hX+N)$, $N\sim \mathcal{N}(0,1)$, $\mathcal{P}_{\mathbf{x}}(P_T)$ is the set of distributions on  $\{x_1,x_2,\cdots,x_{\gamma+1}\}$ such that $\mathbb{E}(X^2)\leq P_T$,
and $q(y)=k$ if $y\in [t_{k},t_{k+1}], k\in \{1,\cdots,\gamma-1]$ and $q(y)=0$ if $y>t_{\gamma-1}$ or $y<t_{1}$. Furthermore we have defined $\gamma'\triangleq \min(\ell^{n_q}-1, (\ell-1)\delta n_q)$,
$\widehat{Y'}= q'(hX+N)$, and $q'(y)= k$ if $y\in [t_{k},t_{k+1}], k\in \{1,\cdots,\gamma'-1]$ and $q(y)=0$ if $y<t_{1}$ and $q(y)=\gamma'$ if $y>t_{\gamma'-1}$.
\end{Theorem}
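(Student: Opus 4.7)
The plan is to mirror the structure of the proof of Theorems \ref{th:1} and \ref{th:4}, replacing the envelope-detector-based combinatorial analysis with its polynomial counterpart provided by Proposition \ref{Prop:5}. The argument decomposes into a converse (upper bound) and an achievability (lower bound), and the parity of $\delta$ enters only through the saturation behavior of polynomials at $y\to\pm\infty$.

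\textbf{Converse.} For any quantizer $Q(\cdot)$ built from $n_q$ polynomial operators $f_j\in\mathcal{F}^{\delta}_{poly}$ together with a threshold matrix $\mathbf{t}$, each shifted polynomial $f_{j,k}=f_j-t(j,k)$ has at most $\delta$ real roots, so the total number of transition points of $Q(\cdot)$ is at most $(\ell-1)\delta n_q$. By Proposition \ref{Prop:5}, item 6, this yields $|\mathcal{C}|\leq\min(\ell^{n_q},(\ell-1)\delta n_q)$ for even $\delta$, since both tails $y\to\pm\infty$ force the leading term of $f_j$ to dominate with the same sign and hence produce identical saturation codewords, while for odd $\delta$ the two tails produce distinct saturation vectors, giving $|\mathcal{C}|\leq\min(\ell^{n_q},(\ell-1)\delta n_q+1)$. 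Since the channel $X\to\widehat{Y}$ is a discrete memoryless channel whose output alphabet has cardinality $|\mathcal{C}|=|\mathsf{P}|$, its capacity is upper-bounded by $\sup_{P_X} I(X;\widehat{Y})$ and, by standard convexity/Carathéodory arguments, the optimizing $P_X$ may be taken to be supported on $\gamma\leq|\mathcal{C}|$ points, one per Voronoi region. Maximizing over $(n_q,\ell)\in\mathcal{N}(P_{ADC})$ and over admissible thresholds yields the right-hand sides in \eqref{eq:th:5} and in the upper bound of \eqref{eq:th:3}.

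\textbf{Achievability.} Proposition \ref{Prop:4} (applied with transition counts $\kappa_j=(\ell-1)\delta$ for each $j\in[n_q]$, which are even when $\delta\geq 1$ and satisfy $|\kappa_j-\kappa_{j'}|=0$) provides a code $\mathcal{C}$ satisfying properties 1)--5) of Proposition \ref{Prop:5} with $|\mathcal{C}|=\min(\ell^{n_q},(\ell-1)\delta n_q)$. To realize $\mathcal{C}$ by a polynomial quantizer I would fix an increasing root sequence $r_1<r_2<\cdots$ consistent with the transition pattern of $\mathcal{C}$, partition the roots by position into blocks of size $\delta$ per threshold, and for each $j\in[n_q]$ select a degree-$\delta$ polynomial $f_j$ whose level sets $\{y:f_j(y)=t(j,k)\}$ realize the prescribed $\delta$ roots for each $k\in[\ell-1]$; this is done by first choosing $f_j$ with distinct critical values well-separated from the desired thresholds $t(j,1)<\cdots<t(j,\ell-1)$ so that each level set is transverse and contributes exactly $\delta$ simple roots, and then using a continuity/perturbation argument to nudge the coefficients of $f_j$ so that the resulting roots interleave in the order dictated by $\mathcal{C}$. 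Once the quantizer is constructed, the input distribution $P_X\in\mathcal{P}_{\mathbf{x}}(P_T)$ supported on $\gamma$ points is optimized to achieve $\sup I(X;\widehat{Y})$; a standard Shannon random-coding argument then yields reliable communication at this rate under the power constraint $\mathbb{E}(X^2)\leq P_T$.

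\textbf{The main obstacle} is the explicit polynomial realization in the achievability step. Unlike envelope detectors, where Proposition \ref{Prop:3} provided an inductive root-by-root construction but required the fully-symmetric property of Proposition \ref{prop:1.5}, polynomial operators lack this symmetry yet also offer strictly less freedom in root placement: the $\ell-1$ shifted polynomials $f_{j,k}$ all share the same leading coefficient and differ only by a constant, so their roots must all arise as level sets of a single degree-$\delta$ polynomial $f_j$ having only $\delta+1$ free parameters, whereas we impose up to $(\ell-1)\delta$ root-location constraints per $j$. This rigidity is precisely why converse and achievability match exactly for even $\delta$ but leave a gap of one Voronoi region for odd $\delta$: although the extreme regions at $y\to\pm\infty$ produce distinct ADC saturation vectors in the odd case, the polynomial construction that realizes all intermediate transitions consistent with properties 1)--5) of Proposition \ref{Prop:5} cannot in general activate both tails as a distinct additional codeword, which accounts for the additive $+1$ appearing in $\Gamma'$ but absent from $\Gamma$.
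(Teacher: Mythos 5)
Your proposal follows essentially the same route as the paper: the paper's entire proof of Theorem \ref{th:5} is the single sentence ``Using Proposition \ref{Prop:4} and Proposition \ref{Prop:5}, and following the arguments in the proof of Theorem \ref{th:1}, one can prove the following theorem,'' and your converse (root counting plus the even/odd tail-saturation argument of Proposition \ref{Prop:5}, item 6, plus the discreteness of the optimal input) and your achievability (code construction via Proposition \ref{Prop:4} followed by a quantizer realization) are exactly the two ingredients the paper invokes. In that sense the approach matches, and your write-up is considerably more explicit than the paper's.

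That said, the obstacle you flag in your final paragraph is a genuine gap, and it is worth stating that the paper does not close it either. For envelope detectors, Proposition \ref{Prop:3} gives an explicit recipe that maps any admissible (fully-symmetric) root sequence to bias vectors and thresholds; Theorem \ref{th:4} correspondingly restricts the supremum to $\mathbf{t}\in\mathcal{T}_{\gamma}$. Theorem \ref{th:5}, by contrast, takes the supremum over \emph{all} $\mathbf{t}\in\mathbb{R}^{\gamma-1}$, so achievability requires realizing an arbitrary increasing sequence of $(\ell-1)\delta n_q$ transition points as roots of the shifted polynomials $f_{j,k}=f_j-t(j,k)$. As you observe, for a fixed $j$ the $\ell-1$ shifted polynomials differ only by constants, so their $(\ell-1)\delta$ roots are level sets of one degree-$\delta$ polynomial with only $\delta+1$ coefficients; already for $\delta=2$ the two roots of $f_j-t$ are forced to be symmetric about the common vertex for every $t$, so the roots cannot be placed freely. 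Your ``continuity/perturbation'' sketch does not overcome this rigidity, and no analogue of Proposition \ref{Prop:3} for polynomials is proved in the paper. So either the achievability in \eqref{eq:th:5} and the lower bound in \eqref{eq:th:3} need a restricted threshold set (as in Theorem \ref{th:4}), or an explicit polynomial-realization lemma must be supplied; for $\ell=2$ (a single threshold per ADC, hence $\delta$ freely placeable roots per polynomial) the construction does go through, and that is the regime the numerical section actually uses.
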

\color{black}
We make the following observations regarding the achievable regions in Theorems \ref{th:4} and \ref{th:5}: 
\\1) Equations \eqref{eq:th:2} and \eqref{eq:th:3} imply that the capacity expressions for odd and even $\delta$ are different.  The reason is that even-degree polynomials yield the same output sign as their input converges to $\pm \infty$, for odd-degree polynomials the output signs are different  as their input converges to $\pm \infty$. This is in contrast with Theorem \ref{th:4}, where the capacity expression is the same for even and odd values of $\delta$. The reason is that the for absolute values the output sign is positive on both sides.
\\2) The achievable region of Theorem \ref{th:5} strictly subsumes Theorem \ref{th:4} for equal $\gamma$ since envelope detectors generate absolute value functions which force a symmetric structure on the Voronoi regions of $q(\cdot)$. This manifests in the fully-symmetric condition $\mathbf{t}\in \mathcal{T}_{\gamma_\ell}$ in Theorem \ref{th:4}; whereas for polynomial functions, no such symmetry is required. 
\\3) One potential approach to improve the capacity in Theorem \ref{th:4} is to augment the envelope detectors by linearly combining their output with the original signal. That is, to generate operators of the form $f(y)=|y-a|+by, a,b\in \mathbb{R}$ instead of $f(y)=|y-b|, b\in \mathbb{R}$. This removes the fully-symmetric condition $\mathbf{t}\in \mathcal{T}_{\gamma_\ell}$. However, such linear combinations are challenging to implement using analog circuits due to timing issues in synchronizing the output of the envelope detector with the original signal. 

\section{A Hybrid Beamforming Architecture with One-bit ADCs}
\label{sec:hyb}
In the previous sections, we have investigated the channel capacity for SISO systems equipped with different collections of implementable analog functions. 
In this section, we consider MIMO systems with one-bit ADCs. We provide a quantization setup, using envelope detectors, which accommodates QAM modulation, and derive an inner bounds to the system capacity. In the next section, we numerically evaluate the resulting capacity and provide comparisons with prior works. 
\subsection{Quantizer Construction} 
We assume that $n_q>n_r$. As mentioned in Section \ref{sec:form} a quantizer is characterized by its analog processing functions $f_j(\cdot), j\in [n_q]$ and ADC thresholds\footnote{We consider one-bit ADCs in this section, hence, each ADC has only one threshold value.} $t^{n_q}\in \mathbb{R}^{n_q}$. Let us fix a threshold step parameter $\zeta>0$. We take the analog processing functions as follows:
\begin{align*}
    f_j({y}^{n_r})=
    \begin{cases}
       {y}_j \quad &\text{ if } j\leq n_r,\\
        |{y}_{\bar{j}}|& \text{ if } j>n_r,
    \end{cases} \qquad 
\end{align*}
where $\bar{j}$ is the modulo $n_r$ residual of $j$. We take the threshold values as follows:
\begin{align}
    t_j=
    \begin{cases}
        0 \quad &\text{ if } j\leq n_r,\\
        \floor{\frac{j}{n_q}}\zeta& \text{ if } j>n_r.
    \end{cases} \qquad 
    \label{eq:zeta}
\end{align}
This choice is clarified through the following example. 
 \begin{figure}[t]
\centering 
\includegraphics[width=0.25\textwidth]{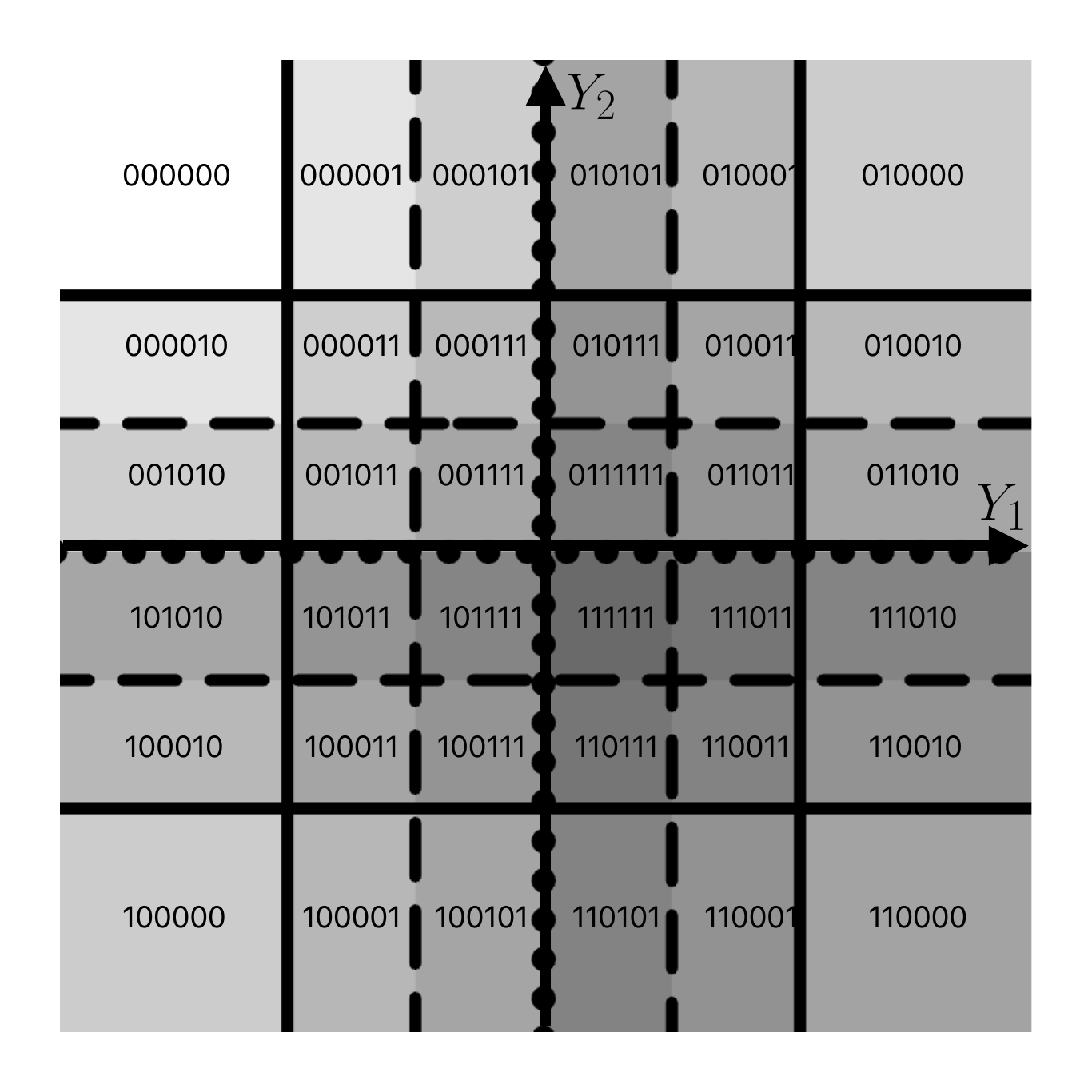}
\caption{Quantizer outputs and Voronoi regions in Example \ref{ex:4}.}
\vspace{-.2in}
\label{fig:quant}
\end{figure}
\begin{Example}[\textbf{Choice of Thresholds}]
\label{ex:4}
    Let $n_r=2$, $n_q=6$, and $\zeta=1$. Then, for the construction described above, the six one-bit ADC operations are as follows:
    \begin{align*}
       & q_1:{y}_1\lessgtr 0, \quad  q_2:{y}_2\lessgtr 0, \quad 
        q_3:|{y}_1|\lessgtr 1, \quad 
        \\&q_4:|{y}_2|\lessgtr 1, \quad 
        q_5:|{y}_1|\lessgtr 2, \quad 
        q_6:|{y}_2|\lessgtr 2.
    \end{align*}
    The quantizer outputs are shown in Figure \ref{fig:quant}. Note that this resembles a 16-QAM modulation.  
\end{Example}
\subsection{Achievable Rates at High SNR}
\color{black}
As argued in Section \ref{sec:env}, the high SNR capacity is equal to the maximum number of Voronoi regions which can be generated given the number of ADCs $n_q$ with $\ell$ output levels, and the set of implementable analog functions. The following theorem provides upper and lower bounds on the high SNR channel capacity of a MIMO system equipped with $n_q$ one-bit ADCs. The lower-bound is based on MIMO systems equipped with envelope detectors, whereas the upper-bound holds generally regardless of analog processing restrictions.

\begin{figure*}[!ht]
\centering
  \begin{subfigure}[b]{0.3\textwidth}
  \includegraphics[width=\linewidth]{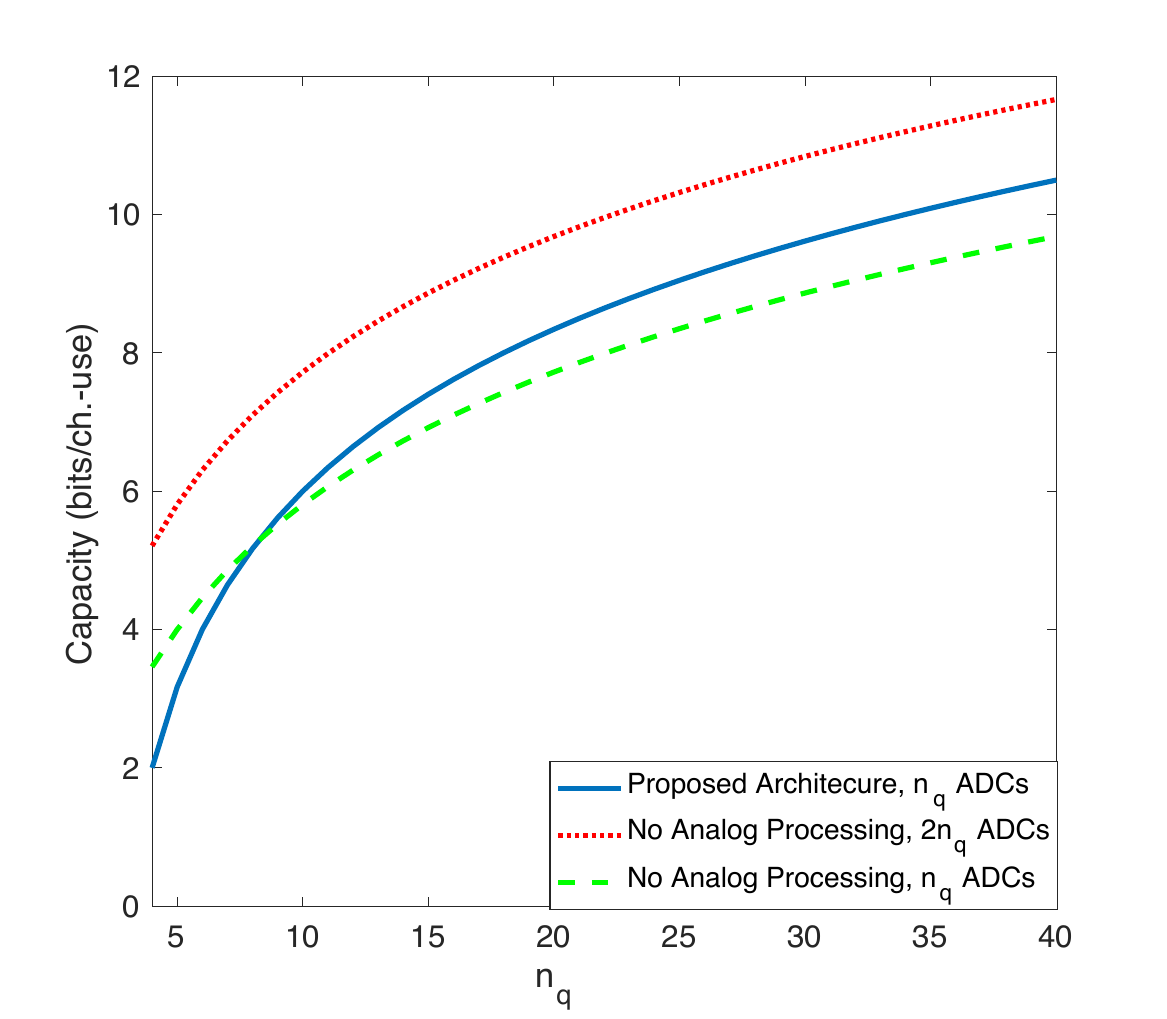}
  \vspace{-.25in}
    \caption{}
  \end{subfigure}
  \hfill
  \begin{subfigure}[b]{0.33\textwidth}
    \includegraphics[width=\linewidth]{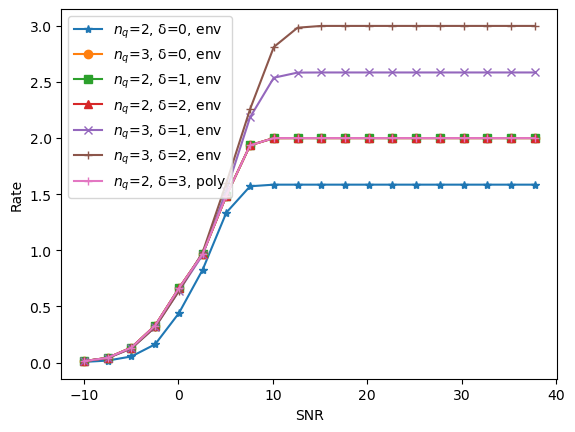}
    \vspace{-.25in}
    \caption{}
  \end{subfigure}
  \hfill
  \begin{subfigure}[b]{0.33\textwidth}
    \includegraphics[width=\linewidth]{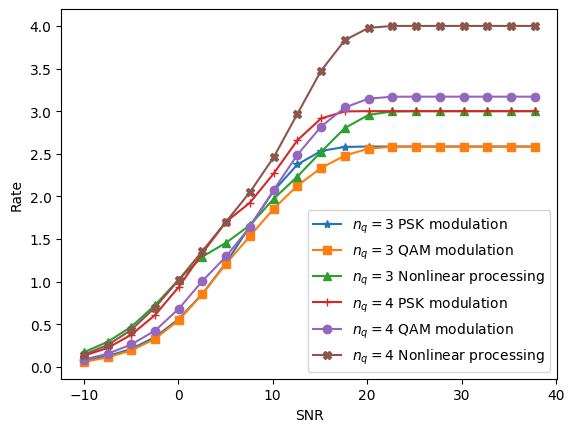}
    \vspace{-.25in}
    \caption{}
  \end{subfigure}
    \caption{\textcolor{black}{(a) Channel capacity for i) the proposed architecture and $n_q$ one-bit ADCs (lower-bound in Theorem \ref{th:6}), ii) no analog processing and $2n_q$ one-bit ADCs (upper-bound in Theorem \ref{th:6})  and
iii) no analog processing and $n_q$ one-bit ADCs (~\cite{khalili2021mimo}).
(b) Achievable rates for various $(n_q,\delta)$ with $\ell=2$ using envelope detectors; $\delta=0$ represents the baseline without analog processing. 
(c) Achievable rates for the hybrid beamforming architecture of Section \ref{sec:hyb}.}
}
\label{fig:simulations}
\vspace{-0.2in}
\end{figure*}

\begin{Theorem}
\label{th:6}
Given $P_T,P_{ADC}>0$ and $n_r,n_q \in \mathbb{N}$, let $C_{env}(P_T,P_{ADC},n_r,n_q)$ denote the channel capacity of the MIMO system. Then, 
    \begin{align}
    \label{eq:hyb}
    &   n_r\left(1+\log{(\ell-1)}+\log\left(\frac{n_q-n_r}{n_r}\right)\right)
 \!\!\leq\!\! \lim_{P_T\to \infty} 
   C_{env}(P_T,P_{ADC},n_r,n_q) 
   \\& \nonumber \qquad \qquad\qquad\leq \log\sum_{k = 0}^{n_r} {2(\ell-1)n_q \choose k},
    \end{align}
    where $n_q= \floor{\frac{P_{ADC}}{2\alpha}}$.
\end{Theorem}
\color{black}

The lower bound in Equation \eqref{eq:hyb} is achieved by the quantizer introduced in the previous section. To see this, note that by construction,  the quantizer partitions each axis into $2(\ell-1)\left(\frac{n_q-n_r}{n_r}\right)$ intervals, and each resulting quantization region is mapped to a unique quantizer output (e.g., Figure \ref{fig:quant}). So,  the total number of unique quantizer outputs is $|\mathcal{C}|= (2(\ell-1)(\frac{n_q-n_r}{n_r}))^{n_r}$. The result follows by noting that the communication rate is $\log{|\mathcal{C}|}$. 
The upper bound follows by counting the number of partition regions generated by $2(\ell-1)n_q$ hyperplanes in general position in the $n_r$-dimensional Euclidean space (e.g. \cite{khalili2021mimo}). Figure \ref{fig:simulations}(a) provides numerical simulations of the i) upper bounds and ii) lower bounds in Equation \eqref{eq:hyb} and  iii) the high SNR channel capacity under hybrid beamforming without analog processing derived in \cite{khalili2021mimo}, where we have taken $n_r=3$. It can be observed that the proposed architecture outperforms the one in \cite{khalili2021mimo} if the number of one-bit ADCs is larger than $n_q=8$. 



\section{Numerical Analysis of Channel Capacity}
\label{sec:num}
We compute an inner-bound to the capacity expression of the SISO system in Theorem \ref{th:4}, for various SNR values and as a function of the number of ADCs $n_q$ and the number of concatenated envelope detectors $\delta$. 
\textcolor{black}{To this end, we conduct a brute-force search as described in the following. Recall that the optimization is over three sets of parameters, namely, the input points, $\mathbf{x}$, threshold vector $\mathbf{t}$, and probability distribution $P_X$. 
To search for the value of $\textbf{x}$ and $\textbf{t}$, we discretize the interval $[-3,3]$ using a grid  with step-size $0.1$. We search over all symmetric vectors  $\textbf{t}$ over the grid, where symmetry is with respect to the origin. We place each input point ${x}_i$ in the center of the corresponding quantization interval $[t_i,t_{i+1}]$. 
Furthermore, we optimize the distribution $P_X$ over the resulting discrete space via the Blahut-Arimoto algorithm. That is,  for each configuration, we find optimal input distribution, using the modified Blahut-Arimoto algorithm for discrete memoryless channels with input cost constraints given in \cite{kobayashi2018joint}.}
Fig.  \ref{fig:simulations}(b) shows the achievable rates for SNRs in the range of -10 to 40 dB for various values of $(n_q,\delta)$. 
Observer that for $n_q=\delta=2$, the inner bound in Figure \ref{fig:simulations}(b) at SNRs higher than 15dB surpasses the high SNR capacity of systems without nonlinear analog processing with $n_q=2$ and $\delta=0$, and the high SNR capacity of the former is more than $25\%$ greater than that of the latter.  
Note that the achievable rate for quadratic operator architectures, for appropriately chosen parameters, would completely match the ones shown in Figure \ref{fig:simulations}(b). For instance, for a polynomial operators with $\delta=2$, if the input to the ADC is $(Y-a)^2$ and the threshold is $b$, then the setup is equivalent to feeding $|Y-a|$ to an ADC with threshold $\sqrt{b}$. Consequently, we do not plot the resulting achievable rates using polynomial operators with $\delta=2$,  since they would match the ones in Figure \ref{fig:simulations}(b) for envelop detectors with $\delta=1$. 

 \begin{figure*}[!ht]
\centering 
\includegraphics[width=0.9\textwidth]{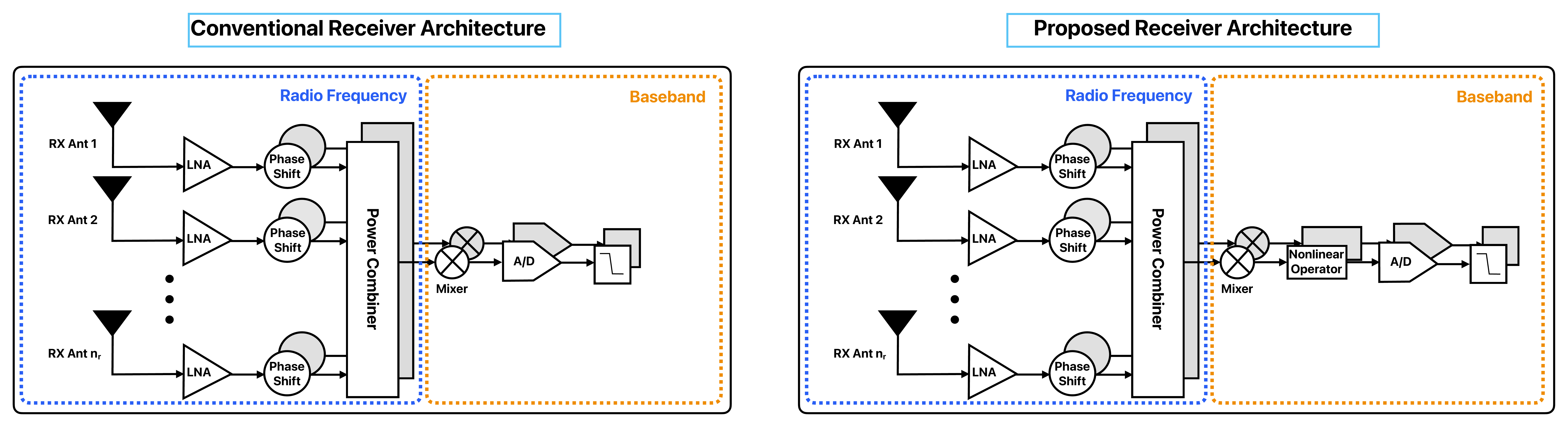}
\caption{\textcolor{black}{A hybrid beamforming receiver using (left) conventional architectures and (right) proposed architecture.}}
\label{receiver}
\end{figure*}
\begin{figure*}
    \centering

\begin{minipage}{\textwidth}
\color{black}
        \centering
        \textbf{Power Breakdown for Analog Beamforming with 4 Antennas}
        
        \scriptsize 
        \setlength{\tabcolsep}{2pt} 
        \begin{tabular}{|c|c|c|c|c|c|c|c|c|c|c|}
            \hline
            \textbf{Receiver} & 
            \textbf{LNA} & 
            \textbf{Phase Shifters} & 
            \textbf{Combiners} & 
            \textbf{Mixers} & 
            \textbf{NL Operator} & 
            \textbf{ADC\cite{4Gadc}$^\ddagger$} & 
            \textbf{Total (mW)} &
            \textbf{Rate} & 
            \textbf{Rate/Power} & 
            \textbf{\% Improvement}\\\hline
            
            Linear Operator &
            \textit{$4\times10$} &
            \textit{$4\times11.5$} &
            \textit{$1\times19.5$} &
            \textit{$1\times14.2$} &
            \textit{$0$} & 
            \textit{$1\times2\times70fJ\times2^{8}\times10^9\times10^3$} & 
            \textit{$155.54$} & 
            \textit{$8$} & 
            \textit{$0.05143$} &
            \textit{--} \\ \hline
            
            1 Env. Detector &
            \textit{$4\times10$} &
            \textit{$4\times11.5$} &
            \textit{$1\times19.5$} &
            \textit{$1\times14.2$} &
            \textit{$1\times3$} & 
            \textit{$8\times2\times70fJ\times2^{3}\times10^9\times10^3$} & 
            \textit{$131.66$}& 
            \textit{$6.807$}& 
            \textit{0.05170}&
            \textit{0.53\%}\\ \hline

            3 Env. Detectors &
            \textit{$4\times10$} &
            \textit{$4\times11.5$} &
            \textit{$1\times19.5$} &
            \textit{$1\times14.2$} &
            \textit{$1\times9$} & 
            \textit{$8\times2\times70fJ\times2^{3}\times10^9\times10^3$}& 
            \textit{$137.66$}& 
            \textit{$8.807$}& 
            \textit{0.06398}&
            \textit{24.39\%}\\\hline
            
             6 Env. Detectors & 
             \textit{$4\times10$} & 
             \textit{$4\times11.5$} & 
             \textit{$1\times19.5$} & 
             \textit{$1\times14.2$} & 
             \textit{$1\times18$}& 
             \textit{$8\times2\times70fJ\times2^{3}\times10^9\times10^3$}& 
             \textit{$146.66$} & 
             \textit{$11.807$} & 
             \textit{0.08051} &
             \textbf{56.53\%}\\\hline 

            $2^{nd}$ degree Poly. &
            \textit{$4\times10$} &
            \textit{$4\times11.5$} &
            \textit{$1\times19.5$} &
            \textit{$1\times14.2$} &
            \textit{$1\times5$} & 
            \textit{$8\times2\times70fJ\times2^{3}\times10^9\times10^3$}& 
            \textit{$133.66$}& 
            \textit{$6.807$}& 
            \textit{0.05093}&
            --\\ \hline

            $4^{th}$ degree Poly. &
            \textit{$4\times10$} &
            \textit{$4\times11.5$} &
            \textit{$1\times19.5$} &
            \textit{$1\times14.2$} &
            \textit{$1\times10$} & \textit{$8\times2\times70fJ\times2^{3}\times10^9\times10^3$} & 
            \textit{$138.66$}& 
            \textit{$7.807$}& 
            \textit{0.05631}&
            \textit{9.47\%}\\ \hline
        \end{tabular}
    \end{minipage}
    \vspace{0.1in} 
    
    \begin{minipage}{\textwidth}
    \color{black}
        \centering
        \textbf{Power Breakdown for Analog Beamforming with 16 Antennas}
        
        \scriptsize 
        \setlength{\tabcolsep}{2pt} 
        \begin{tabular}{|c|c|c|c|c|c|c|c|c|c|c|}
            \hline
            \textbf{Receiver} & 
            \textbf{LNA} & 
            \textbf{Phase Shifters} & 
            \textbf{Combiners} & 
            \textbf{Mixers} & 
            \textbf{NL Operator} & 
            \textbf{ADC\cite{4Gadc}$^\ddagger$} & 
            \textbf{Total (mW)} &
            \textbf{Rate} & 
            \textbf{Rate/Power} & 
            \textbf{\% Improvement}\\\hline
            
            Linear Operator &
            \textit{$16\times10$} &
            \textit{$16\times11.5$} &
            \textit{$1\times19.5$} &
            \textit{$1\times14.2$} &
            \textit{$0$} & 
            \textit{$1\times2\times70fJ\times2^{8}\times10^9\times10^3$} & 
            \textit{$413.54$} & 
            \textit{$8$} & 
            \textit{$0.01935$} &
            --\\ \hline
            
            1 Env. Detector &
            \textit{$16\times10$} &
            \textit{$16\times11.5$} &
            \textit{$1\times19.5$} &
            \textit{$1\times14.2$} &
            \textit{$1\times3$} & 
            \textit{$16\times2\times70fJ\times2^{3}\times10^9\times10^3$} & 
            \textit{$398.62$} & 
            \textit{$7.807$}& 
            \textit{0.01958}&
            \textit{1.24\%}\\ \hline

            3 Env. Detectors &
            \textit{$16\times10$} &
            \textit{$16\times11.5$} &
            \textit{$1\times19.5$} &
            \textit{$1\times14.2$} &
            \textit{$1\times9$} & 
            \textit{$16\times2\times70fJ\times2^{3}\times10^9\times10^3$} & 
            \textit{$404.62$}& 
            \textit{$9.807$}& 
            \textit{0.02424}&
            \textit{25.29\%}\\\hline
            
             6 Env. Detectors & 
             \textit{$16\times10$} & 
             \textit{$16\times11.5$} & 
             \textit{$1\times19.5$} & 
             \textit{$1\times14.2$} & 
             \textit{$1\times18$}& 
             \textit{$16\times2\times70fJ\times2^{3}\times10^9\times10^3$} & 
             \textit{$413.62$}& 
             \textit{$12.807$}& 
             \textit{0.03096}&
             \textbf{60.06\%}\\\hline 

            $2^{nd}$ degree Poly. &
            \textit{$16\times10$} &
            \textit{$16\times11.5$} &
            \textit{$1\times19.5$} &
            \textit{$1\times14.2$} &
            \textit{$1\times5$} & 
            \textit{$16\times2\times70fJ\times2^{3}\times10^9\times10^3$} & 
            \textit{$400.62$}& 
            \textit{$7.807$}& 
            \textit{0.01949}&
            \textit{0.74\%}\\ \hline

            $4^{th}$ degree Poly. &
            \textit{$16\times10$} &
            \textit{$16\times11.5$} &
            \textit{$1\times19.5$} &
            \textit{$1\times14.2$} &
            \textit{$1\times10$} & \textit{$16\times2\times70fJ\times2^{3}\times10^9\times10^3$} & 
            \textit{$405.62$}& 
            \textit{$8.807$}& 
            \textit{0.02171}&
            \textit{12.24\%}\\ \hline
        \end{tabular}
    \end{minipage}
    \vspace{0.1in} 
    
    \begin{minipage}{\textwidth}
    \color{black}
        \centering
        \textbf{Power Breakdown for Hybrid Beamforming with 4 Antennas}
        
        \scriptsize 
        \setlength{\tabcolsep}{2pt} 
        \begin{tabular}{|c|c|c|c|c|c|c|c|c|c|c|}
            \hline
            \textbf{Receiver} & 
            \textbf{LNA} & 
            \textbf{Phase Shifters} & 
            \textbf{Combiners} & 
            \textbf{Mixers} & 
            \textbf{NL Operator} & 
            \textbf{ADC\cite{4Gadc}$^\ddagger$} & 
            \textbf{Total (mW)} &
            \textbf{Rate} & 
            \textbf{Rate/Power} & 
            \textbf{\% Improvement}\\\hline
            
            Linear Operator &
            \textit{$4\times10$} &
            \textit{$8\times11.5$} &
            \textit{$2\times19.5$} &
            \textit{$2\times14.2$} &
            \textit{$0$} & 
            \textit{$2\times2\times70fJ\times2^{8}\times10^9\times10^3$} & 
            \textit{$271.08$} & 
            \textit{$16$} & 
            \textit{$0.05902$} &
            \textit{--} \\ \hline
            
            1 Env. Detector &
            \textit{$4\times10$} &
            \textit{$8\times11.5$} &
            \textit{$2\times19.5$} &
            \textit{$2\times14.2$} &
            \textit{$2\times3$} & \textit{$8\times2\times70fJ\times2^{3}\times10^9\times10^3$} & 
            \textit{$214.36$} & 
            \textit{$11.6147$} & 
            \textit{0.05418} &
            \textit{--} \\ \hline

            3 Env. Detectors &
            \textit{$4\times10$} &
            \textit{$8\times11.5$} &
            \textit{$2\times19.5$} &
            \textit{$2\times14.2$} &
            \textit{$2\times9$} & \textit{$8\times2\times70fJ\times2^{3}\times10^9\times10^3$} & 
            \textit{$226.36$}& 
            \textit{$15.6147$} & 
            \textit{0.06898}&
            \textit{16.87\%}\\ \hline

            6 Env. Detectors &
            \textit{$4\times10$} &
            \textit{$8\times11.5$} &
            \textit{$2\times19.5$} &
            \textit{$2\times14.2$} &
            \textit{$2\times18$} & \textit{$8\times2\times70fJ\times2^{3}\times10^9\times10^3$} & 
            \textit{$244.36$}& 
            \textit{$21.6147$} & 
            \textit{0.08845}&
            \textbf{49.86\%}\\ \hline

            $2^{nd}$ degree Poly. &
            \textit{$4\times10$} &
            \textit{$8\times11.5$} &
            \textit{$2\times19.5$} &
            \textit{$2\times14.2$} &
            \textit{$2\times5$} & \textit{$8\times2\times70fJ\times2^{3}\times10^9\times10^3$} & 
            \textit{$218.36$}& 
            \textit{$11.6147$} & 
            \textit{0.05319}&
            \textit{--}\\ \hline

            $4^{th}$ degree Poly. &
            \textit{$4\times10$} &
            \textit{$8\times11.5$} &
            \textit{$2\times19.5$} &
            \textit{$2\times14.2$} &
            \textit{$2\times10$} & \textit{$8\times2\times70fJ\times2^{3}\times10^9\times10^3$} & 
            \textit{$228.36$}& 
            \textit{$13.6147$} & 
            \textit{0.05962}&
            \textit{1.01\%}\\ \hline
        \end{tabular}
    \end{minipage}
    \vspace{0.1in} 
    
    \begin{minipage}{\textwidth}
    \color{black}
        \centering
        \textbf{Power Breakdown for Hybrid Beamforming with 16 Antennas}
        
        \scriptsize 
        \setlength{\tabcolsep}{2pt} 
        \begin{tabular}{|c|c|c|c|c|c|c|c|c|c|c|}
            \hline
            \textbf{Receiver} & 
            \textbf{LNA} & 
            \textbf{Phase Shifters} & 
            \textbf{Combiners} & 
            \textbf{Mixers} & 
            \textbf{NL Operator} & 
            \textbf{ADC\cite{4Gadc}$^\ddagger$} & 
            \textbf{Total (mW)} &
            \textbf{Rate} & 
            \textbf{Rate/Power} & 
            \textbf{\% Improvement}\\\hline
            
            Linear Operator &
            \textit{$16\times10$} &
            \textit{$32\times11.5$} &
            \textit{$2\times19.5$} &
            \textit{$2\times14.2$} &
            \textit{$0$} & 
            \textit{$2\times2\times70fJ\times2^{8}\times10^9\times10^3$} & 
            \textit{$667.08$} & 
            \textit{$16$} & 
            \textit{$0.02399$} &
            \textit{--} \\ \hline
            
            1 Env. Detector &
            \textit{$16\times10$} &
            \textit{$32\times11.5$} &
            \textit{$2\times19.5$} &
            \textit{$2\times14.2$} &
            \textit{$2\times3$} & \textit{$32\times2\times70fJ\times2^{3}\times10^9\times10^3$} & 
            \textit{$637.24$}& 
            \textit{$15.6147$} & 
            \textit{0.02450}&
            \textit{2.16\%}\\ \hline

            3 Env. Detectors &
            \textit{$16\times10$} &
            \textit{$32\times11.5$} &
            \textit{$2\times19.5$} &
            \textit{$2\times14.2$} &
            \textit{$2\times9$} & \textit{$32\times2\times70fJ\times2^{3}\times10^9\times10^3$} & 
            \textit{$649.24$}& 
            \textit{$19.6147$} & 
            \textit{0.03021}&
            \textit{25.96\%}\\ \hline

            6 Env. Detectors &
            \textit{$16\times10$} &
            \textit{$32\times11.5$} &
            \textit{$2\times19.5$} &
            \textit{$2\times14.2$} &
            \textit{$2\times18$} & \textit{$32\times2\times70fJ\times2^{3}\times10^9\times10^3$} & 
            \textit{$667.24$}& 
            \textit{$25.6147$} & 
            \textit{0.03839}&
            \textbf{60.05\%}\\ \hline

            $2^{nd}$ degree Poly. &
            \textit{$16\times10$} &
            \textit{$32\times11.5$} &
            \textit{$2\times19.5$} &
            \textit{$2\times14.2$} &
            \textit{$2\times5$} & \textit{$32\times2\times70fJ\times2^{3}\times10^9\times10^3$} & 
            \textit{$641.24$}& 
            \textit{$15.6147$} & 
            \textit{0.02435}&
            \textit{1.52\%}\\ \hline

            $4^{th}$ degree Poly. &
            \textit{$16\times10$} &
            \textit{$32\times11.5$} &
            \textit{$2\times19.5$} &
            \textit{$2\times14.2$} &
            \textit{$2\times10$} & \textit{$32\times2\times70fJ\times2^{3}\times10^9\times10^3$} & 
            \textit{$651.24$}& 
            \textit{$17.6147$} & 
            \textit{0.02705}&
            \textit{12.77\%}\\ \hline
        \end{tabular}
    \end{minipage}
    \vspace{0.1in} 

    $^\dagger$ {For each RX, a four bit binary weighted current adder is implemented with 1.6mW power consumption}\\
    $^\ddagger$ {Both power consumption of 3-bit-ADC and eight-bit ADC are normalized from the reference}\\
    
    \vspace{0in} 

    \caption{The breakdown of power consumption for the receiver in a conventional architecture and the proposed architecture.}
    \label{fig:two_tables}
\end{figure*}
\textcolor{black}{
We further evaluate the achievable  rates using the beamforming architecture described in Section \ref{fig:simulations}(c). We numerically simulate the communication system for $n_r=2$, $\ell=2$, and $n_q\in \{3,4\}$ and find an estimate of the achievable rates empirically ( Figure \ref{fig:simulations}(c)). To compute the rate, we have optimized the parameter $\zeta$ in Equation \eqref{eq:zeta} via brute-force search over the interval [0,1] with step-size 0.01. We have simulated the channel by generating 150000 independent and identically distributed samples of noise vectors and input messages, and have empirically  estimated the transition probability of the discrete channel resulting from the quantization process. We have found the optimal probability distribution $P_X$ using the Blahut-Arimoto algorithm.  
It can be observed in Figure \ref{fig:simulations}(c) that for both $n_q=3$ and $n_q=4$, the high SNR rate is larger than the $n_r\left(1+\log{\left(\frac{n_q-n_r}{n_r}\right)}\right)$ lower-bound given in Theorem \ref{th:6}, and it significantly larger than the linear processing baselines.  }
\color{black}

\section{Circuit Design and Power Analysis}
\label{sec:cir}

\begin{figure*}[!ht]
\centering
  \begin{subfigure}[b]{0.3\textwidth}
  \includegraphics[width=\linewidth]{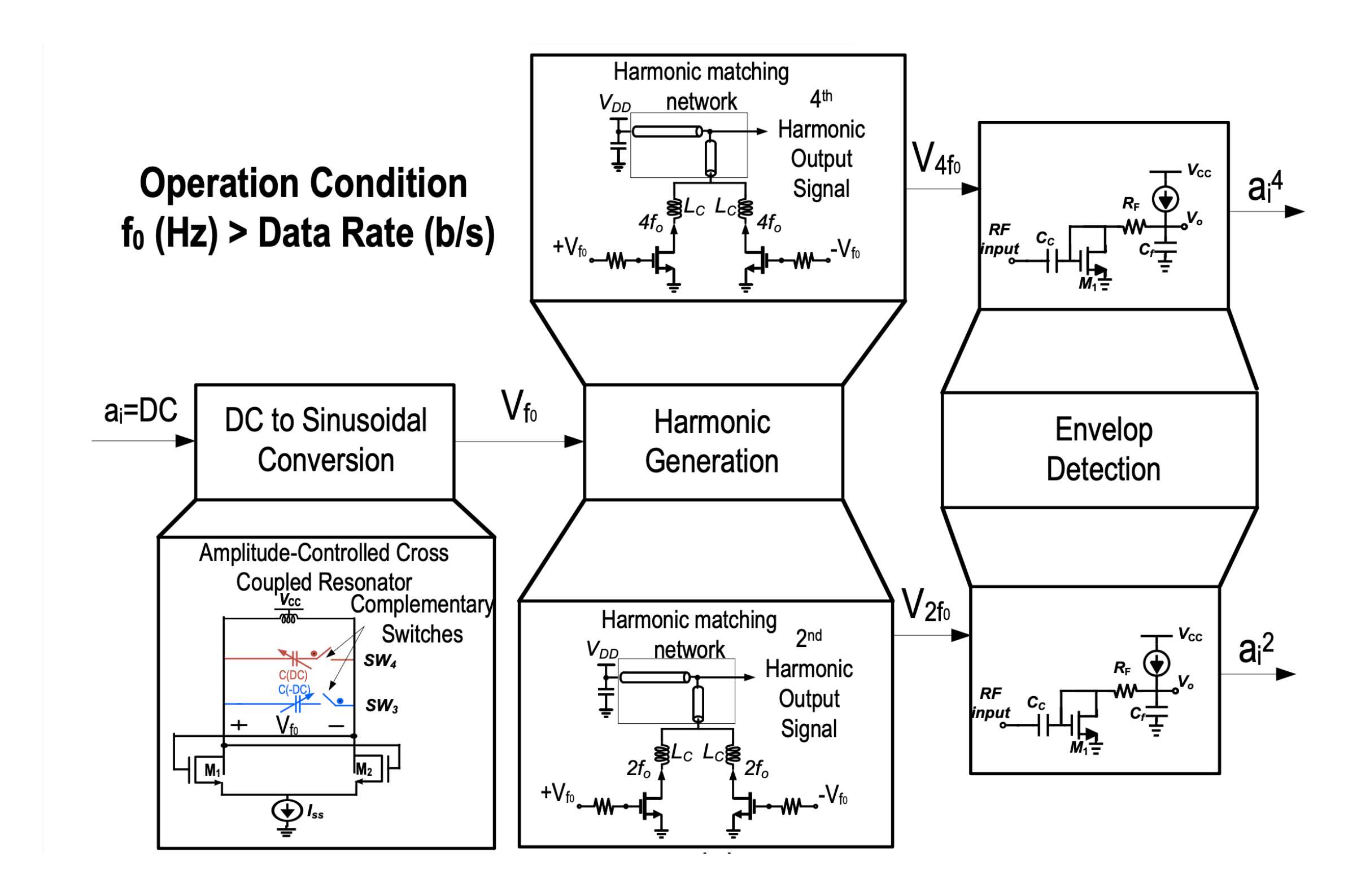}
    \caption{}
  \end{subfigure}
  \hfill
  \begin{subfigure}[b]{0.33\textwidth}
    \includegraphics[width=\linewidth]{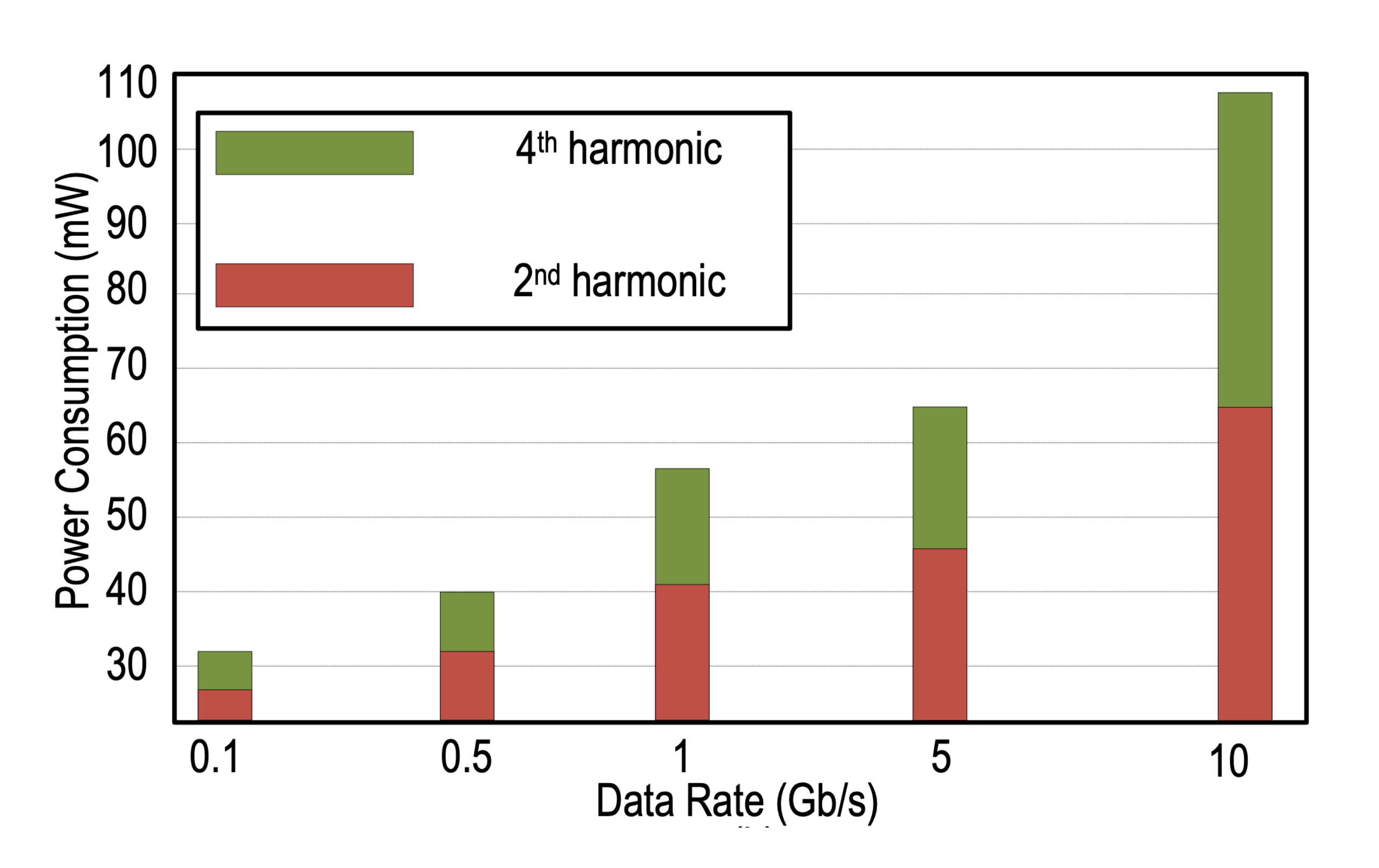}
    \caption{}
  \end{subfigure}
  \hfill
  \begin{subfigure}[b]{0.33\textwidth}
    \raisebox{0.7\height}{\includegraphics[width=\linewidth]{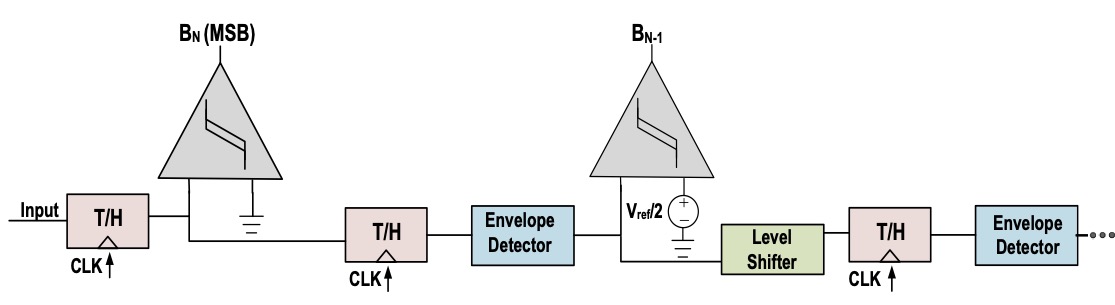}}
    \caption{}
  \end{subfigure}
    \caption{(a) Generation of fourth and second order polynomials. (b) Power consumption for generation of equal voltage amplitude (0 dBm power) at the second and fourth harmonics. (c) The cascade of circuits emulating absolute value operators.}
\label{fig:poly}
\end{figure*}
\begin{figure}[t]
\centering 
\includegraphics[width=0.4\textwidth]{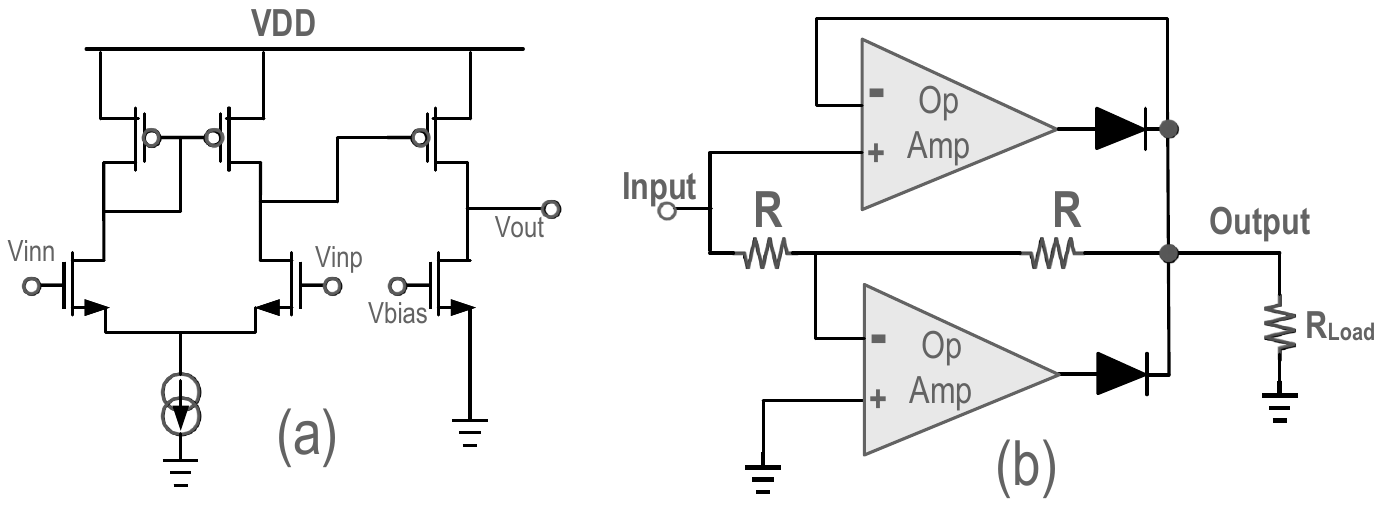}
\caption{(a) Differential to single-ended op-amp in the envelope detector (b) Circuit diagram of the envelope detector}
\label{Xuyang1}
\vspace{-0.2in}
\end{figure}

\color{black}
Figure \ref{receiver} shows the architecture of a conventional hybrid beamforming receiver (left) and the proposed receiver (right). In the Figure \ref{fig:two_tables}, the itemized and total power consumption is compared between the two architectures in a 22 nm FDSOI CMOS technology. For the power values in Figure \ref{fig:two_tables}, we assume a mm-wave carrier frequency of 60 GHz, RF bandwidth of 1 GHz, {and 16 channels}. The reported power consumption for RF elements are based on the measurement results of the chip reported in \cite{xuyang} and the power consumption of nonlinear operators and ADCs are based on post-layout simulation results and analytical values in \cite{Razavi_paper}, respectively. To simplify the rate analysis we assume communication takes place at the high SNR regime (Theorems \ref{th:4}, \ref{th:5}, and \ref{th:6}). 

\subsection{Analysis Based on 22 nm FDSOI CMOS Technology}
\textbf{Analog Beamforming:} 
The receive antennas are connected to a low noise amplifier (LNA),  
and the amplified signals are combined using phase-shifters. A mixer is used to produce the baseband signal. For the linear architecture, we consider a single eight-bit ADC. For the nonlinear architectures, we have used the indicated number of non-linear operators, including envelope detectors, and second degree and fourth degree polynomials. We have used a three-bit ADCs per antenna in the sixteen receiver antenna scenario, and two  three-bit ADCs per antenna in the four receiver antenna scenario. The results show significant potential improvements in the rate-power tradeoff by using nonlinear processing prior to quantization. For instance, in the sixteen receiver antenna scenario, it can be seen that the rate to power ratio achieved using the conventional architecture is 0.0193, whereas the architecture introduced in this work achieves ratio up to 0.03096, hence achieving more than $\%60$ improvement. 
\\\textbf{Hybrid Beamforming:} The signals are fed to $2n_r$ phase shifters and then linearly combined into two analog signals. The two resulting analog signals are passed through two mixers and then through individual eight-bit ADCs. For the bottom two tables, we use the indicated number of non-linear analog circuits and two three-bit ADCs per antenna to digitize the signal. There is more than $\%60$ gains, in terms of rate to power ratio.  
\color{black}
\begin{figure}[t]
\centering 
\begin{subfigure}[t]{0.48\textwidth}
\includegraphics[width=\textwidth]{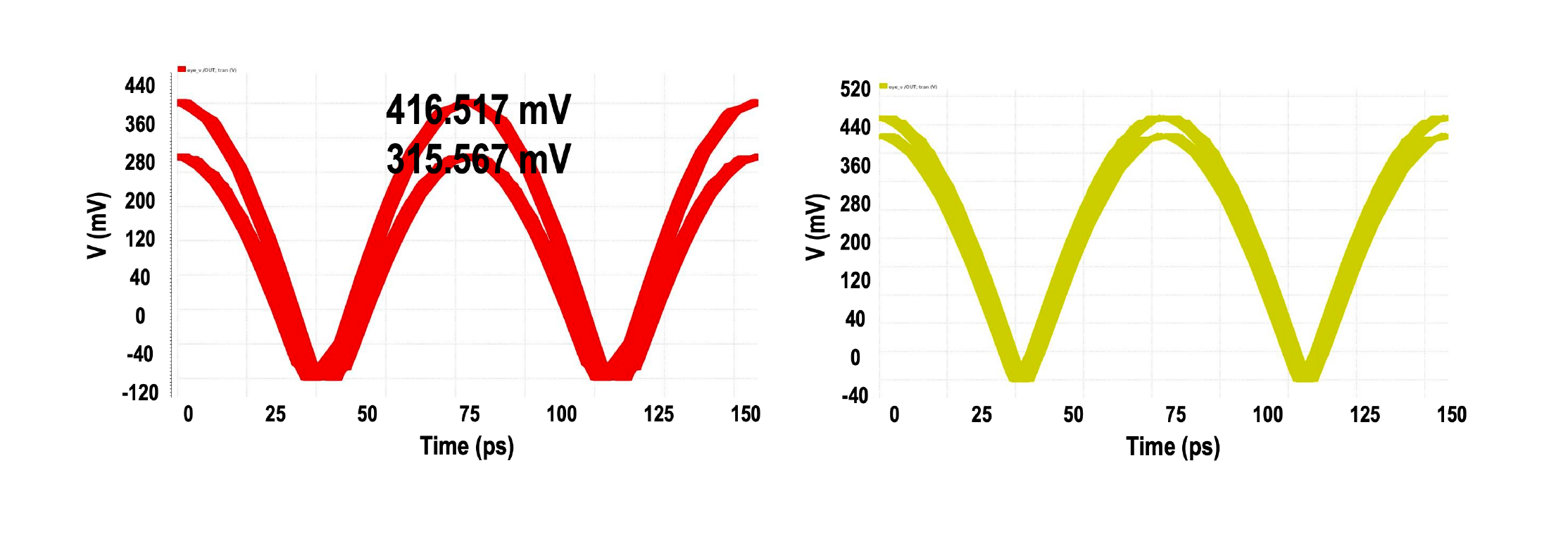}
\vspace{-.34in}
\caption{}
\vspace{-.1in}
\end{subfigure}
\begin{subfigure}[t]{0.48\textwidth}
\includegraphics[width=\textwidth, height=1.6in]{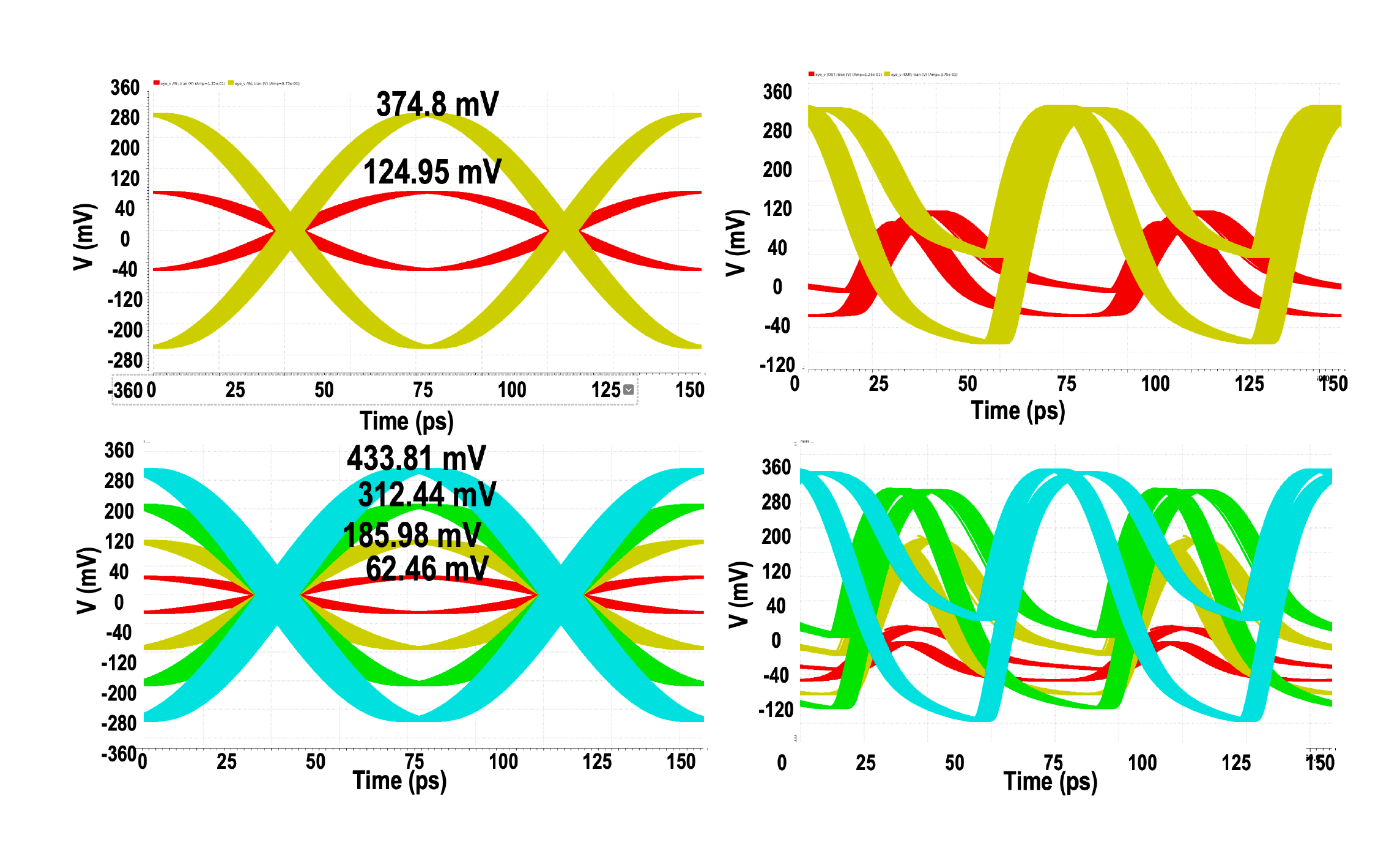}
\vspace{-.3in}
\caption{}
\vspace{-.1in}
\end{subfigure}
\caption{(a) Impact of amplifier gain on the half-cycle rectification amplitude distortion  for  gain of 10 dB (left), and 25 dB (right).
(b) Input waveform eye diagram for (top left) PAM4 and (bottom left) PAM8 modulations compared with that of setups with envelop detectors (right).}
\vspace{-.08in}
\label{Xuyang3}
\vspace{-0.2in}
\end{figure}
\vspace{-0.05in}
\subsection{Analysis Based on 65 nm Bulk CMOS Technology}
We provide two circuit designs for envelope detectors and polynomial operators using the 65 nm Bulk CMOS technology. There are two main advantages to this technology: 1) 65 nm technology has a higher nominal supply voltage than 22 nm. 2) SOI technologies exhibit more threshold voltage variation compared to the bulk CMOS technologies which impacts the operation of baseband circuitry \cite{SOIBULK}.


\noindent \textbf{Polynomial Operators:} 
To explain the proposed construction, consider  $f(x)=x^4+x^2$, where $x$ is the input value.  Fig. \ref{fig:poly}(a) shows a circuit to generate $f(x)=x^4+x^2$. In order to generate equal amplitudes at the second and fourth harmonics, the power gain of the transistors generating the fourth harmonic should be larger, leading to an increased power consumption in generating the fourth order term compared to the second order term. Figure \ref{fig:poly}(b) illustrates numerical values for the power consumption of the proposed circuit through simulations. For large bandwidth communications, a higher frequency for the generation of polynomials in Figure \ref{fig:poly}(a) may be needed.   

\label{envelope_circuit}
\noindent\textbf{Envelope Detectors:} The proposed multi-step envelope detector is shown in Figure \ref{fig:poly}(c). 
 The operational amplifiers deployed in the envelope detector are two-stage differential to single-ended amplifiers with gain-bandwidth product (GBW) of 32 GHz in Figure \ref{Xuyang1}(a). This GBW allows to amplify signals up to 10 GHz with a gain above 15 dB, which is critical for the operation of the envelope detector shown in Figure \ref{Xuyang1}(b). The resistors in this circuit establish a trade-off between the bandwidth and the waveform distortion. In other words, the larger resistance value leads to smaller distortion of the flipping negative part at the expense of increasing the resistance associated with the output pole of each envelope detector stage, and subsequently limiting the bandwidth of operation. In our simulations, we assumed 500$\Omega$ resistors. According to the simulation results shown in Figure \ref{Xuyang3}(a), the higher gain of each operational amplifier leads to smaller amplitude distortion at the output, which naturally is achieved at the expense of a smaller bandwidth for the amplifier, thereby leading to a distortion-bandwidth tradeoff.  

Simulation of two concatenated envelope detector circuits is shown in Figure \ref{fig:poly}(c). The DC level shifter is realized by diode-based circuits which consume no DC power and can operate up to 6 GHz. 
The simulated eye-diagram  performance of the two-stage envelope detector is illustrated in Figure \ref{Xuyang3}(b). For high data rates (12 Gb/s), we consider the input to be the absolute value function circuitry to be a sinusoidal waveform due to the slewing effects \cite{slew}. To evaluate our proposed envelope detector, two modulation scenarios of Pulse Amplitude Modulation PAM4 and PAM8 modulation schemes are considered as the input to the envelope detector, as shown in Figure \ref{Xuyang3}(b). The simulation results demonstrate that the amplitude ratios at the output follow the input amplitude ratios even when the dynamic range of input waveform is only between -400 to +400 mV. 

\section{Conclusion}
The application of nonlinear analog operations in MIMO receivers was considered. A receiver architecture consisting of linear analog combiners, implementable nonlinear analog operators, and few-bit threshold ADCs was designed, and
the fundamental information theoretic performance limits of the resulting communication system were investigated. Furthermore, circuit-level simulations,  were provided to show the implementability of the desired nonlinear analog operators with practical power budgets. 

\bibliographystyle{IEEEtran}
\bibliography{References}

\begin{appendices}
    \section{Proof of Proposition \ref{Prop:1}}
    \label{App:Prop1}
Let $n_q= \lfloor\frac{P_{ADC}}{2\alpha}\rfloor$.
For a given quantizer with associated code $\mathcal{C}$, the high SNR achievable rate is $\log{|\mathsf{P}|}= \log{|\mathcal{C}|}$. So, finding the capacity is equivalent to finding the maximum $|\mathcal{C}|$ over all choices of $\mathbf{q}(\cdot)$.
First, let us prove the converse result. Note that $|\mathcal{C}|\leq 2n_q$ since $\mathbf{c}_0=\mathbf{c}_{2n_q}$. The reason is that for the absolute value function $f_{j,1}(\cdot),j\in [n_q]$, we have $\lim_{y\to \infty}f_{j,1}(y)=\lim_{y\to -\infty}f_{j,1}(y)= \infty $. So, 
\begin{align}
\label{eq:lim}
    c_{1,j}=\lim_{y\to -\infty} \mathbbm{1}(f_{j}-t_{j,1}>0)=\lim_{y\to \infty} \mathbbm{1}(f_{j}-t_{j,1}>0)=c_{2n_q,j}.
\end{align}
As a result, $\log{|\mathcal{C}|}\leq 1+\log{n_q}$. Next, we prove achievability.
Let $t_{j,1}=\frac{n_q+1}{2}, j\in [n_q]$ and $f_{j,1}(y)\triangleq |y-j-\frac{n_q+1}{2}|, j\in [n_q]$, so that the roots of $f_{j}(\cdot)-t_{j,1}$ are $j$ and $j-n_q-1$. Then, $(r_1,r_2,\cdots,r_{2n_q})\!=\!(-n_q,-n_q\!+\!1,\cdots,-1,1,2,\cdots, n_q)$ and \begin{align*}
    c_{i,j}=
    \begin{cases}
    1-\mathbbm{1}(j\leq i)\qquad& \text{if }\quad  i\leq n_q,\\
    \mathbbm{1}(n_q-j+1\leq i-n_q) & \text{otherwise}.
    \end{cases}
\end{align*}
For instance, for $n_q=3$, we have $\mathcal{C}=(111,011,001,$ $000,001,011,111)$. Hence, the only repeated codewords are $\mathbf{c}_0$ and $\mathbf{c}_{2n_q}$. As a result, $|\mathcal{C}|=2n_q$, and $\log{|\mathcal{C}|}=1+\log{n_q}$ is achievable. 
\qed

\section{Proof of Theorem \ref{th:1}}
\label{App:th:1}
We provide an outline of the proof. Note that the input alphabet has at most $2n_q+1$ mass points since,
based on the proof of Proposition \ref{Prop:1}, the channel output can take at most $2n_q$ values. Let the quantized channel output be denoted by $\widehat{W}$.
Since the conditional measure $P_{\widehat{W}|X}(\cdot|x), x\in\mathbb{R}$ is continuous in $x$, and $\lim_{x\to \infty} P_{\widehat{W}|X}(\mathcal{A}|x) = \mathbbm{1}(\hat{w}\in \mathcal{A}), \mathcal{A}\in \mathbb{B}$ for some fixed $\hat{w}$, the conditions in the proof of \cite[Prop. 1]{singh2009limits} hold, and the optimal input distribution has bounded support. From the extension of Witsenhausen's result \cite{witsenhausen1980some} given in \cite[Prop. 2]{singh2009limits}, the optimal input distribution is discrete and takes at most ${2n_q+1}$ values. This completes the proof of converse. To prove achievability, it suffices to show that one can choose the set of functions $f_{j}(\cdot), j\in [n_q]$ and quantization thresholds $t_{j,1}, j\in [n_q]$ such that the resulting quantizer operates as described in the theorem statement, i.e., it generates $\widehat{W}=\mathbf{q}(hX+N)$ where $\mathbf{q}(y)=k$ if $y\in [t_{k},t_{k+1}], k\in \{1,\cdots,2n_q]$ and $\mathbf{q}(y)=0$ if $y>t_{2n_q}$ or $y<t_{1}$. To this end, let $\mathbf{t}^*$ be the optimal quantizer thresholds in \eqref{eq:th:1}. Let $r_1,r_2,\cdots,r_{2n_q}$ be the elements of $\mathbf{t}^*$ written in non-decreasing order. Define a quantizer with associated analog functions $f_{j,1}(y)\triangleq  |y-\frac{r_j+r_{n_q+j}}{2}|$ and $t_{j,1}=\frac{r_{n_q+j}-r_j}{2}$. Note that $t_j>0$ since $r_j, j\in [n_q]$ are non-decreasing. Then, similar to the proof of Proposition \ref{Prop:1}, the quantization rule gives distinct outputs for $y\in [r_{k},r_{k+1}], k\in \{1,\cdots,2n_q]$ and $y\in [r_{2n_q},\infty) \cup [-\infty,r_{1}]
$ as desired.
\qed

\section{Proof of Proposition \ref{Prop:3}}
\label{App:Prop:5}
For $j\in [n_q]$ and non-decreasing vector $(r_{i_1},r_{i_2},\cdots, r_{i_{2^\delta}})$ where $i_\lambda\in \mathcal{I}_j, j\in [n_q], \lambda\in [2^{\delta}]$, define 
\begin{align*}
&a_{1,j}\triangleq  \frac{r_{i_1}+r_{i_{2^{\delta}}}}{2}, \qquad a_{s,j}\triangleq  \frac{r_{i_{2^{\delta}}}+r_{i_{\eta_s}}}{2}-\sum_{s'=1}^{s-1} a_{s',j}, 1<s\leq \delta,\\
 &t_{j}\triangleq r_{i_{2^{\delta}}}-\sum_{s'=1}^{\delta-1} a_{s',j}, 1<s\leq \delta
\end{align*}
where $\eta_s\triangleq 2^{\delta}- \sum_{s'=1}^{s-1}2^{\delta-s'}+1, s>1$. 
Consider a  quantizer $\mathbf{q}(\cdot)$ with ADC thresholds  $t(1:n_q)$ and associated analog functions $f_{j}(y)\triangleq A_{\delta}(x,a^{\delta}), j\in [n_q]$. Then, $r_1,r_2,\cdots, r_{\gamma}$ are the non-decreasing sequence of roots of $f_{j}(\cdot), j\in [n_q]$, and the associated code of the quantizer $\mathbf{q}(\cdot)$ is $\mathcal{C}$ as desired.
\qed

\section{Proof of Proposition \ref{Prop:6}}
\label{App:Prop:6}
\label{App:Prop:4}
We provide an outline of the proof.  Let us consider the following cases:
\\\textbf{Case 1:} $\sum_{j=1}^{n_q }\kappa_j\geq 2^{n_q}$
\\In this case, one can use a balanced Gray code \cite{bhat1996balanced} to construct $\mathcal{C}$. A balanced Gray code is a (binary) code where consecutive codewords have Hamming distance equal to one, and each of the bit positions changes value either $2\floor{\frac{2^{n_q}}{2n_q}}$ times or $2\ceil{\frac{2^{n_q}}{2n_q}}$ times. If $\min_{j\in [n_q]} \kappa_j\geq 2\ceil{\frac{2^{n_q}}{2n_q}}$ the proof is complete as one can concatenate the balanced gray code with a series of additional repeated codewords to satisfy the transition counts, and since the balanced gray code is a subcode of the resulting code, we have $|\mathcal{C}|=2^{n_q}$. Otherwise,  there exists $j\in [n_q]$ such that $\kappa_j< 2\ceil{\frac{2^{n_q}}{2n_q}}$.
In this case, without loss of generality, let us assume that $\kappa_1\leq \kappa_2,\cdots \leq \kappa_{n_q}$. Note that since $|\kappa_j-\kappa_j'|\leq 2, j,j'\in [n_q]$ and $\kappa_j, j\in [n_q]$ are even, there is at most one $j^*\in [n_q]$ such that $\kappa_{j^*}\leq \kappa_{j^*+1}$. Let $\kappa'_1,\kappa'_2,\cdots,\kappa'_{n_q}$ be the transition count sequence of a balanced gray code $\mathcal{C}'$ written in non-decreasing order. 
Note that $2\ceil{\frac{2^{n_q}}{2n_q}}-2\floor{\frac{2^{n_q}}{2n_q}}=2$. Hence, similar to the above argument, there can only be one $j'\in [n_q]$ for  which   $\kappa_{j'}\leq \kappa_{j'+1}$. 
Since $\sum_{j=1}^{n_q} \kappa_j\geq  2^{n_q}= \sum_{j=1}^{n_q} \kappa'_j$, we must have $j^*\leq j'$. So, the balanced gray code can be used as a subcode similar to the previous case by correctly ordering the bit positions to match the order of $\kappa_j, j\in [n_q]$. This completes the proof. 
\\\textbf{Case 2:} $\sum_{j=1}^{n_q}\kappa_j< 2^{n_q}$
\\The proof is based on techniques used in the construciton of balanced Gray codes \cite{bhat1996balanced}.
We prove the result by induction on $n_q$.
The proof for $n_q=1,2$ is straightforward and follows by construction of length-one and length-two sequences. For $n_q>2$, Assume that the result holds for all $n'_q\leq n_q$. Without loss of generality, assume that $\kappa_1\leq \kappa_2,\leq \cdots \leq \kappa_{n_q}$. The proof considers four sub-cases as follows.
\\\textbf{Case 2.i:} $\sum_{j=3}^{n_q}\kappa_j\in [0,2^{n_q-2}]$
\\In this case, by the induction assumption, there exists $\mathcal{C}'$,  a code with codewords of length $n_q-2$, whose transition sequence is $\kappa_3,\kappa_4,\cdots,\kappa_{n_q}$, and $|\mathcal{C}'|= \sum_{j=3}^{n_q}\kappa_j$. We construct $\mathcal{C}$ from $\mathcal{C}'$ as follows. Let $\mathbf{c}_{0}=(0,0,\mathbf{c}'_0)$, $\mathbf{c}_{1}=(0,1,\mathbf{c}'_0)$, $\mathbf{c}_{2}=(1,1,\mathbf{c}'_0)$,
$\mathbf{c}_{3}=(1,0,\mathbf{c}'_0)$,
$\mathbf{c}_{4}=(1,0,\mathbf{c}'_1)$,
$\mathbf{c}_{5}=(0,0,\mathbf{c}'_1)$, $\mathbf{c}_{6}=(0,1,\mathbf{c}'_1)$,
$\mathbf{c}_{7}=(1,1,\mathbf{c}'_1)$,$\cdots$. This resembles the procedure for constructing balanced gray codes \cite{bhat1996balanced}. We continue concatenating the first two bits of each codeword in $\mathcal{C}$ to the codewords in $\mathcal{C}'$ using the procedure described above until $\kappa_1$ transitions for position 1 and $\kappa_2$ transitions for position 2 have taken place. Note that this is always possible since i) for each two codewords in $\mathcal{C}'$, we `spend' two transitions of each of the first and second positions in $\mathcal{C}$ to produce four new codewords, ii) $\kappa_2-\kappa_1\leq 2$, and iii) $\kappa_2\leq \sum_{j=3}^{n_q}\kappa_j$, where the latter condition ensures that we do not run out of codewords in $\mathcal{C}'$ before the necessary transitions in positions 1 and 2 are completed. After $\kappa_2+1$ codewords, the transitions in positions 1 and 2 are completed, and the last produced codeword is $(0,0,\mathbf{c}'_{\kappa_2+1})$  since $\kappa_1$ and $\kappa_2$ are both even. To complete the code $\mathcal{C}$, we add $(0,0,\mathbf{c}'_{i}), i\in [\kappa_2+2,\sum_{j=3}^{n_q}\kappa_j]$. Then, by construction, we have $|\mathcal{C}|=|\mathcal{C}'|+\kappa_1+\kappa_2=\sum_{j=1}^{n_q}\kappa_j$ and the code satisfied Properties 1), 2), 3), and 5) in Proposition \ref{Prop:2}. 
\\\textbf{Case 2.ii:}$\sum_{j=3}^{n_q}\kappa_j\in [2^{n_q-2}, 2^{n_q-1}]$
\\ Similar to the previous case, let $\mathcal{C}'$ be a balanced gray code with codeword length $n_q-2$ and transition counts $\kappa'_1\leq \kappa'_2\leq \cdots\leq \kappa'_{n_q-2}$. Define $\kappa''_{j}=\kappa_j- \kappa'_{j+2}, j\in \{3,4,\cdots,n_q\}$. Note that $\kappa''_j$ satisfy the conditions on transition counts in the proposition statement, and hence by the induction assumption, there exists a code $\mathcal{C}''$ with transition counts $\kappa''_j, j\in [n_q-2]$. The proof is completed by appropriately concatenating $\mathcal{C}'$ and $\mathcal{C}''$ to construct $\mathcal{C}$. Let $\gamma''$ be the number of codewords in $\mathcal{C}''$ and define $\mathbf{c}_i=(0,0,\mathbf{c}''_i), i\in [\gamma'']$, $\mathbf{c}_{\gamma''+1}=(0,1,\mathbf{c}''_{\gamma''})$,  $\mathbf{c}_{\gamma''+2}=(1,1,\mathbf{c}''_{\gamma''})$, $\mathbf{c}_{\gamma''+3}=(1,0,\mathbf{c}''_{\gamma''})$, $\mathbf{c}_{\gamma''+4}=(1,0,\mathbf{c}'_{1})$,$\cdots$. Similar to the previous case, it is straightforward to show that this procedure yields a code $\mathcal{C}$ with the desired transition sequence. 

The proof for the two subcases where $\sum_{j=3}^{n_q}\kappa_j\in [2^{n_q-1},3 \times 2^{n_q-2}]$ and $\sum_{j=3}^{n_q}\kappa_j\in [3\times 2^{n_q-1}, \times 2^{n_q-1}]$ is similar and is omitted for brevity.

\end{appendices}

\end{document}